\def\BibTeX{{\rm B\kern-.05em{\sc i\kern-.025em b}\kern-.08em
    T\kern-.1667em\lower.7ex\hbox{E}\kern-.125emX}}
 \newtheorem{lemma}{Lemma} \newtheorem{proof}{Proof}[section]
\renewenvironment{proof}{{\indent\it Proof:\quad}}{\hfill $\square$\par}
\newcommand{\linebreakand}{%
  \end{@IEEEauthorhalign}
  \hfill\mbox{}\par
  \mbox{}\hfill\begin{@IEEEauthorhalign}
}
\begin{document}

\title{FileInsurer: A Scalable and Reliable Protocol for Decentralized File Storage in Blockchain
\thanks{This work has been performed with support from the National Natural Science Foundation of China (No. 11871366), Qing Lan Project of Jiangsu Province, China, and the Algorand Foundation Grants Program}
}

\author{\IEEEauthorblockN{Hongyin Chen\IEEEauthorrefmark{1}\IEEEauthorrefmark{2}, Yuxuan Lu\IEEEauthorrefmark{1}\IEEEauthorrefmark{3}, Yukun Cheng\IEEEauthorrefmark{4}\IEEEauthorrefmark{5}}
\thanks{\IEEEauthorrefmark{1} These authors contributed to the work equllly and should be regarded as co-first authors.}
\thanks{\IEEEauthorrefmark{4} Yukun Cheng is the corresponding author.}
\IEEEauthorblockA{\IEEEauthorrefmark{2}\IEEEauthorrefmark{3}\textit{Center on Frontiers of Computing Studies, Peking University, Beijing, China}}
\IEEEauthorblockA{\IEEEauthorrefmark{4}\textit{Suzhou University of Science and Technology, Suzhou, China}}
\IEEEauthorblockA{Email: \IEEEauthorrefmark{2}chenhongyin@pku.edu.cn, \IEEEauthorrefmark{3}yx\_lu@pku.edu.cn, \IEEEauthorrefmark{5}ykcheng@amss.ac.cn}\\
}

\maketitle
\begin{abstract}

With the development of blockchain applications, the requirements for file storage in blockchain are increasing rapidly. Many protocols, including Filecoin, Arweave, and Sia, have been proposed to provide scalable decentralized file storage for blockchain applications. However, the reliability is not well promised by existing protocols. Inspired by the idea of insurance, we innovatively propose a decentralized file storage protocol in blockchain, named as FileInsurer, to achieve both scalability and reliability. While ensuring scalability by distributed storage, FileInsurer guarantees reliability by enhancing robustness and fully compensating for the file loss. Specifically, under mild conditions, we prove that no more than 0.1\% value of all files should be compensated even if half of the storage collapses. Therefore, only a relatively small deposit needs to be pledged by storage providers to cover the potential file loss. Because of lower burdens of deposit, storage providers have more incentives to participate in the storage network. FileInsurer can run in the top layer of the InterPlanetary File System (IPFS), and thus it can be directly applied in Web 3.0, Non-Fungible Tokens, and Metaverse.

\end{abstract}

\begin{IEEEkeywords}
Decentralized File Storage, Blockchain application, Mechanism Design, Decentralized System, Insurance
\end{IEEEkeywords}

\section{Introduction}

File storage is a fundamental issue in distributed systems. Recently, the developments of Web 3.0~\cite{alabdulwahhab2018web}, Non-Fungible Tokens (NFTs)~\cite{wang2021non,chenabsnft}, and Metaverse~\cite{ryskeldiev2018distributed} have raised high requirements on reliability and accessibility of file storage. For example, the metadata of NFTs should be verifiable and accessible in NFT markets, as the values of NFTs disappear if the metadata is lost. Billions of metadata generated by blockchain applications are searching for reliable storage services.

Traditionally, people store files in personal storage or cloud storage service. However, personal storage struggles to keep files secure and accessible. Additionally, cloud storage lacks transparency and trust~\cite{benisi2020blockchain}. It is hard for users to recognize how many backups of their files should be stored to guarantee security. Moreover, file loss often occurs in cloud storage.

Due to the defects of personal storage and cloud storage, more and more users choose to store files in the blockchain-based \emph{Decentralized Storage Networks} (DSNs) such as Sia~\cite{vorick2014sia}, Filecoin~\cite{benet2018filecoin}, Arweave~\cite{williams2019arweave}, and Storj~\cite{wilkinson2014storj}. In a DSN, storage providers contribute their available hard disks to store files from clients and then earn profits. The storing, discarding, and storing state-changing events of files are recorded in the blockchain. Files can be stored by multiple storage providers to enhance security. Additionally, in Filecoin, backups are changed to be replicas, once they have been proved by proof-of-replication (PoRep). PoRep well resists Sybil attacks~\cite{douceur2002sybil} by a storage provider, who may pretend to store multiple backups by forging multiple identities, while she actually only stores one backup.
PoRep can also be used to ensure providers cannot cheat on the available storage space.

However, the existing DSN protocols do not well promise the reliability. In DSN, there are always files only stored by a small part of storage providers due to the issue of scalability. Therefore, it is impossible to completely avoid loss of files. When files are lost, the owners of these files only receive little compensation.

In this paper, we aim to enhance the reliability of decentralized file storage from the perspective of economic incentive approaches. For the Bitcoin Blockchain~\cite{nakamoto2008bitcoin}, the most success is to apply the economic incentive approach, by awarding a certain amount of token to encourage miners to actively mine. Thus, in the era of blockchain, the issue of economic approaches is getting more and more important. We build a decentralized insurance scheme on files stored in DSN to protect the interests of users when their files are lost. Under the insurance scheme, storage providers need to pledge a deposit before storing files. If a file is lost, which means that all providers storing this file are corrupted, the total deposit from these providers can fully compensate for the loss of this file.

We hope that the deposit should be small to incentivize participants to contribute their storage space. Let us denote \emph{deposit ratio} to be the ratio of the sum of deposits to the total value of files. Chen et. al. \cite{chen2020decentralized} firstly studied how to decrease the deposit ratio in the decentralized custody scheme with insurance. However, the methodology in \cite{chen2020decentralized} cannot be directly applied in our scenario. The reason is that storage providers and files change over time in DSN, while ~\cite{chen2020decentralized} is only suitable for static setting. Our approach is to achieve provable robustness by ensuring \emph{storage randomness}. Storage randomness requires the locations of replicas are randomly selected by DSN, such that these locations are 
uniformly distributed. Consequently, the attackers must corrupt a considerable portion of providers even if they only want to destroy all backups of a small portion of files. Therefore, the randomness can promises that only a relatively small deposit needs to be pledged by storage providers to cover the potential file loss.

\subsection*{Main Contributions}
We propose FileInsurer, a novel design for blockchain-based \emph{Decentralized Storage Network}, to achieve both scalability and reliability of file storage. In our protocol, storage providers are required to pledge deposits to registered sectors and the locations of files are randomly selected. To further ensure storage randomness, locations of files' replicas shall change from time to time because the list of sectors is dynamic.

Our protocol advances the technology of decentralized file storage in the following three aspects. 
\begin{itemize}
    \item Firstly, FileInsurer supports dynamic content stored in sectors with low cost, which is necessary to ensure \emph{storage randomness}. FileInsurer deploys \emph{Dynamic Replication} (DRep) to support adding and refreshing stored files. DRep is also able to resist Sybil attacks and make sure the free space of sectors is indeed available.
    \item Secondly, FileInsurer can achieve provable robustness. In FileInsurer, files are stored as replicas in sectors. Naturally, a file is missing, if and only if all replicas of this file have been destroyed. A sector is collapsed,  as long as any bit in this sector is lost. Under mild conditions, we prove that no more than 0.1\% value of all files are lost even if half of the storage collapses.
    \item Thirdly, FileInsurer implements an insurance scheme on DSN that can provide full compensation for the loss of those missing files. The compensation is covered by the deposit of all crashed storage sectors. Our theoretical analysis indicates that only a small deposit ratio is needed to cover all of the file loss in FileInsurer. 
\end{itemize}
To the best of our knowledge, FileInsurer is the first DSN protocol that can provide full compensation for the file loss and has provable robustness.

\subsection*{Paper Organization}
The rest of this paper is organized as follows. Section~\ref{related} introduces the related works of decentralized storage protocols. In Section~\ref{prelimiaries}, we describe the structure and components of FileInsurer protocol. Then, we continue to introduce the protocol design of FileInsurer in Section~\ref{protocoldesign}. In Section~\ref{analysis}, we propose the theoretical analysis on the scalability, robustness, and deposit issue of our protocol. We also compare FileInsurer with other blockchain-based decentralized storage protocols. In addition, some practical problems in FileInsurer are detailedly discussed in Section~\ref{discussion}. Finally, we summarize our protocol and raise some open problems in Section~\ref{conclusion}.

\section{Related Works}\label{related}
\subsection{InterPlanetary File System (IPFS)}
The InterPlanetary File System (IPFS) is a peer-to-peer distributed file system that seeks to connect all computing devices with the same system of files~\cite{benet2014ipfs}. Files, identified by their cryptographic hashes, are stored and exchanged by nodes in IPFS. Nodes also provide the service of retrieving files to earn profits through BitSwap protocol. The routing of IPFS is achieved by Distributed Hash Tables (DHTs), which is an efficient way to locate data among IPFS nodes. Based on BitSwap and DHTs, IPFS builds an Object Merkle DAG which allows participants to address files through IPFS paths.

\subsection{FileCoin}\label{related:filecoin}
Filecoin builds a blockchain-based \emph{Decentralized Storage Network} which runs in the top layer of IPFS~\cite{benet2018filecoin}. There are three types of participants in Filecoin, which are clients, storage miners, and retrieval miners. Specifically, clients pay to store and retrieve files, storage miners earn profits by registering sectors to offer storage, and retrieval miners earn profits by serving data to clients.

\subsubsection{Proof-of-Replication~(PoRep)}
PoRep~\cite{benet2017proof} is a kind of proof-of-storage scheme deployed in Filecoin. In the PoRep scheme, the prover firstly generates a replica of file $\mathcal{D}$, denoted by $\mathcal{R}^{\mathcal{D}}_{ek}$, through the process of \textsf{PoRep.setup($\mathcal{D}, ek$)}. $ek$ is a randomly chosen encryption key that with $ek$, $\mathcal{R}^{\mathcal{D}}_{ek}$ can be encrypted from $\mathcal{D}$, and $\mathcal{D}$ can be decrypted from $\mathcal{R}^{\mathcal{D}}_{ek}$. The prover then submits the hash root of $\mathcal{R}^{\mathcal{D}}_{ek}$ to the DSN. Finally, the prover proves that $\mathcal{R}^{\mathcal{D}}_{ek}$ is a replica of $\mathcal{D}$ with encryption key $ek$ via SNARK. 

The verification of SNARK is very efficient. However, the calculation of $\mathcal{R}^{\mathcal{D}}_{ek}$ would take a lot of time because it can't be parallelized. Additionally, the calculation of SNARK would consume lots of computation resources.

\subsubsection{Filecoin Sectors}
In Filecoin, sectors are divided into sealed ones and unsealed ones\footnote{See in \url{https://spec.filecoin.io/systems/filecoin_mining/sector/}}. Only sealed sectors are part of the Filecoin network and can get rewards of storage. Unsealed sectors only contain raw data, and a sealed sector can be registered from an unsealed sector by PoRep. Storage miners would pledge deposits when registering a sector, but when the sector crashes, that deposit is burnt other than used for compensating the file loss to clients.

When registering an unsealed sector, if the sector is not full, the rest space of the sector would be filled with zeros before encoding by PoRep. If a sealed sector doesn't contain any files, which means the contents of that sector are all zeros when registering, it's called a committed capacity (CC). Other sealed sectors are called regular sectors. A CC sector can be upgraded to a regular sector by discarding the CC sector and registering a new regular sector. However, the content of a regular sector can be no longer changed. 

\subsubsection{Proof-of-Spacetime}
PoSt is another kind of proof-of-storage scheme for storage miners to prove that they are indeed actually storing a replica. There are two kinds of PoSt in Filecoin. WinningPoSt serves as a part of the Expected Consensus of Filecoin, while WindowPoSt guarantees that the miner continuously maintains a replica over time. Therefore, Sybil attacks are prevented by the combination of WindowPoSt and PoRep because storage miners should actually store all replicas.

\subsubsection{Storage Market and Retrieval Market}
There are two markets in Filecoin, the Storage Market and the Retrieval Market. In the Storage Market, storage miners and clients negotiate on the price and length of storage. Similarly, retrieval miners and clients would negotiate on the price of file retrieving.

\subsection{Other Solutions to Decentralized File Storage}

\subsubsection{Storj}
Storj~\cite{wilkinson2014storj} is a sharding~\cite{kokoris2018omniledger,zhang2020cycledger} based protocol to archive a peer-to-peer cloud storage network implementing end-to-end encryption. It stores files in encrypted shards to ensure that the file itself cannot be recovered by anyone other than the owner. Moreover, it uses erasure code to ensure file availability in case some shards are lost.

\subsubsection{Sia}
Sia~\cite{vorick2014sia} is a platform for decentralized storage enabling the formation of storage contracts between peers. The Sia protocol provides an algorithm of storage proof in order to build storage contracts. According to the file contract, storage providers need to generate proof-of-storage periodically. The client needs to pay for each valid storage proof.

\subsubsection{Arweave}
Arweave~\cite{williams2019arweave} is a mechanism design-based approach to achieving a sustainable and permanent ledger of knowledge and history. Storing files on Arweave only requires a single upfront fee, after which the files become part of the consensus. Arweave uses the mechanism of Proof of Access in consensus to ensure that miners need to store as many files as possible to participate in mining.

\section{preliminaries}\label{prelimiaries}

FileInsurer is a protocol to build a blockchain-based Decentralized Storage Network (DSN)~\cite{benisi2020blockchain}. The structure of DSN could be an independent blockchain or a decentralized application (DApp) parasitic on existing blockchains or other distributed network types. In DSN, a group of participants, called {\em storage providers}, are willing to rent out their unused hardware storage space to store the {\em clients'} files, and then the distributed file storage is realized.

In this section, we introduce the structure and components of FileInsurer protocol. Particularly, we deploy \emph{Dynamic Replication} (DRep) to support dynamic content in sectors
with low cost, which is important to ensure \emph{storage randomness}. Additionally, we also explain why compensation is necessary for DSN.

\subsection{Participants}
There are two kinds of participants in DSN that are \textit{clients} and \textit{storage providers}.

\subsubsection{Client}
 Clients are the participants who have the demand to store files in the network. They propose a request to declare which file needs to be stored, via \textsf{File\_Add} request. Once her file is stored, the client shall pay the rent for the storage service at periodic intervals (introduced in Section \ref{fee_mechanism}), which depends on the file's value and size. 
 They also can ask DSN to discard their files stored before, via \textsf{ File\_Discard} request. Besides, clients can retrieve any file stored in DSN, via \textsf{File\_Get} request, by paying the retrieving payment. As the uploaded files are public in DSN, clients can encrypt their files before uploading if she concerns about privacy.

\subsubsection{Storage Providers}
Storage Providers are the participants who rent out their hard disks to store clients' files and offer the service of retrieving files in exchange for payments. 
When receiving the \textsf{File\_Add} request from a client, DSN
automatically selects several independent storage providers to store this file, so that the robustness could be guaranteed by replicating files.
After receiving a file from a client, providers need to declare that they have obtained this file by \textsf{File\_Confirm} request. In addition, after storing a file, it is necessary for providers to repeatedly submit the proofs of file storage to DSN at each specified checkpoint, to show that they are storing this file, via \textsf{File\_Prove} request. In order to guarantee security, each storage provider must pledge a deposit, so that her deposit could be liquidated to compensate for the loss of clients once her disk is corrupted. When a client requests retrieval of a specified file, the providers, who store this file, compete to respond to the request for the corresponding payment. Hence a {\em Retrieval Market} is formed, in which the clients and providers exchange the file without the witness of DSN.

\subsection{Data Structures}
\Cref{fig:datastructure} shows a brief description of the data structures of the FileInsurer. There are four main data structures, which are sector, file descriptor, allocation table, and pending list.
\subsubsection{Sector}
A disk sector is the smallest unit that a provider rents out to store files. Sector sizes vary but are required to be an integer multiple of a minimum value of $minCapacity$. $minCapacity$ can be set to $64$GB or other deterministic value. A sector is considered to be corrupted, as long as any bit in this sector is destroyed. A file is missing, if and only if the sectors storing this file are all corrupted. In FileInsurer protocol, providers could divide their storage spaces into multiple sectors, and are not allowed to register multiple disks as the same sector. In addition, FileInsurer requires that a file is integratedly stored in a sector, instead of being dispersed into multiple sectors.  
Such a requirement ensures the owner of a lost file obtains the compensation, which can completely make up for her loss. 

\subsubsection{File descriptor}
The file descriptor $f$ describes a file stored in the network, including its size, value, Merkle root, the number of copies, and other necessary information. When a file is stored, the following two conditions must be satisfied.
\begin{itemize}
    \item The total size of the files stored in a sector must not exceed the capacity of this sector.
    \item If a file $f$ is lost, meaning all sectors storing it are all corrupted, then the deposits from these sectors are at least $f.value$ to make up the loss of the file's owner.
\end{itemize}

\subsubsection{Allocation table}
FileInsurer selects some feasible sectors to store a file and makes a note of it recorded in the allocation table. The allocation table will be updated when a file is stored in the network, a file is discarded, or the storage location of a file is transferred. The allocation table is a part of the network consensus and can support fast random access.

\subsubsection{Pending list}
In the design of FileInsurer, some tasks need to be automatically executed at a specific time in the future, such as regularly checking whether a file is saved correctly. Therefore FileInsurer needs to maintain a pending list to save these tasks and their corresponding execution time. When a new time point $t$ is reached, the tasks in the pending list whose timestamp is $t$ will be automatically executed by the network. As the gas fee for these tasks should be paid in advance, tasks that are placed in the 
pending list must have a clear gas used upper bound. In the basic design of FileInsurer, these tasks are only generated through network consensus.

\begin{figure}[ht]
\begin{algorithm}[H]
\caption{Data structures}
\small
\setstretch{.9}
\textbf{Sector}\\
$\textsf{sector}: (\textsf{owner},~ \textsf{id},~ \textsf{capacity},~ \textsf{freeCap},~ \textsf{state})$
\begin{itemize}
    \item \textsf{owner}: the provider who owns the sector.
    \item \textsf{id}: the id of the sector, a provider cannot have two sectors with the same \textsf{id}.
    \item \textsf{capacity}: the storage capacity of the sector.
    \item \textsf{freeCap}: current free capacity of the sector.
    \item \textsf{state}: \texttt{normal} means this sector has capacity to accept new files, \texttt{disable} means the sector no longer accepts new files.
\end{itemize}

\textbf{File descriptor}\\
$\textsf{fileDescriptor}: (\textsf{size},~\textsf{value},~\textsf{merkleRoot},~\textsf{cp},~\textsf{cntdown},~\textsf{state})$
\begin{itemize}
    \item \textsf{size}: the size of the file.
    \item \textsf{value}: the value of the file.
    \item \textsf{merkleRoot}: merkle root of the file.
    \item \textsf{cp}: the number of replicas to be stored in the network, determined by the file value.
    \item \textsf{cntdown}: the number of checkpoints until the next refresh of the file store.
    \item \textsf{state}: \texttt{normal} means this file needs to be stored, \texttt{discard} means this file is discarded.
\end{itemize}

\textbf{Allocation table}\\
$\textsf{allocTable}: \left\{\left(\textsf{fileDescriptor},~\textsf{index}\right)\rightarrow \textsf{allocEntry}\right\}$

$\textsf{allocEntry}: (\textsf{prev},~\textsf{next},~\textsf{last},~\textsf{state})$
\begin{itemize}
    \item \textsf{prev}: the current sector storing the file.
    \item \textsf{next}: the next sector to store the file.
    \item \textsf{last}: time of the last proof of storage.
    \item \textsf{state}: \texttt{alloc} means the file is being (re)allocated to a sector, \texttt{confirm} means that the file is confirmed by the next sector to store, \texttt{normal} means the current sector is storing the file, \texttt{corrupted} means the current sector is corrupted.
\end{itemize}

\textbf{Pending list}\\
$\textsf{pendingList}: \left\{\textsf{time}\rightarrow \left[\textsf{task},\textsf{task},...\right]\right\}$
\begin{itemize}
    \item \textsf{time}: time point when the tasks need to be automatically executed.
    \item \textsf{task}: description and parameters of the task to be executed.
\end{itemize}

\end{algorithm}
\caption{\textbf{The data structures of FileInsurer}}
\label{fig:datastructure}
\end{figure}

\subsection{Interactions between Participants and Network} 

This subsection introduces the abovementioned operations performed by clients and storage providers in detail.

\subsubsection{Client requests}
\begin{itemize}
    \item \textsf{File\_Add}: \textit{Client stores a file in DSN.}\\
    A client submits an order through a \textsf{File\_Add} request to inform DSN of the file's description $f$, containing size $f.size$, value $f.value$, Merkle root $f.merkleRoot$, the number of replicas $f.cp$, and other necessary information. DSN automatically allocates feasible $f.cp$ sectors. When these sectors are found, the client transmits the file to these sectors.
    \item \textsf{File\_Discard}: \textit{Client discards a file stored in DSN.}\\
     It is not necessary for clients to specify how long to store the file in advance. As an alternative, the client can discard the file at any time by submitting \textsf{File\_Discard} request, which contains the description $f$ of this file, to DSN.
    \item \textsf{File\_Get}: \textit{Client retrieves a file from DSN.}\\
    Each client can request any file in DSN, via \textsf{File\_Get} request, by paying a certain amount of tokens. Because this requested file is available in multiple providers' sectors, the retrieve request can be satisfied by receiving one of the copies from these providers.
\end{itemize}

\subsubsection{Provider requests}

\begin{itemize}
    \item \textsf{Sector\_Register}: \textit{Provider registers a new sector in DSN.}\\
    When providers launch a new storage space, they have two options. One is to register the whole storage space as one sector. The other is to divide this space into several parts and each part is registered as one sector. When a sector is registered, the provider shall pledge a deposit proportional to the capacity of this sector.
    
    \item \textsf{Sector\_Disable}: \textit{An operation to affirm that a sector no longer accepts new files.}\\
    In the design of FileInsurer protocol, providers are not allowed to revoke the sectors they leased on the network before. Instead, when a provider decides not to provide storage service from a sector, she shall declare that the sector is disabled, that it is no longer accepts any new file. After all files stored in this sector are allocated to other sectors by the network, the sector is removed from the network.
    
    \item \textsf{File\_Confirm}: \textit{The provider confirms to the network that a file has been received.}\\
    The network automatically specifies the storage sector for files, and the provider of the sector needs to confirm to the network after receiving the client's file.
    
    \item \textsf{File\_Prove}: \textit{The provider submits the certificate to the network of its correct storage of files.}\\
    When providers store the files, they must repeatedly submit proofs of replication to ensure they are storing the files. Proofs are posted on and verified by DSN.
    
    \item \textsf{File\_Supply}: \textit{The provider responds a \textsf{File\_Get} request from a client.}\\
    Once the supply and demand relationship of one file has been established, the transmission of this file would be carried off-chain.
\end{itemize}

\subsection{Dynamic Content in Sectors}\label{PoRep}
In FileInsurer, the content of a sector needs to be dynamic from time to time, which is supported by ensuring storage randomness. FileInsurer can resist Sybil attack by storing the files as multiple replicas. In FileInsurer, these replicas are generated by PoRep, and the free capacity of a sector needs to be proven that it is indeed available. A trivial idea is to make a new replica of the sector whenever the content is changed by PoRep. However, it is not a wise solution because it would lead to an extremely high burden on providers and much more verification of PoRep.

We propose a novel solution called \emph{Dynamic Replication} (DRep) to solve this problem. Different from Filecoin, we don't encode a whole sector into a replica, but make each file in a sector to be a unique replica. We define a \emph{Capacity Replica} (CR) as a replica of zeros bits generated by the PoRep process. When a sector is registered, it should be just filled with $l$ unique CRs. The sector is requested to contain as many CRs as possible while storing files. Therefore, the unsealed space of a sector is smaller than the size of a CR. Figure~\ref{fig:dcs} shows some examples of DRep.

\begin{figure}[h]
    \centering
    \includegraphics[width=0.6\linewidth]{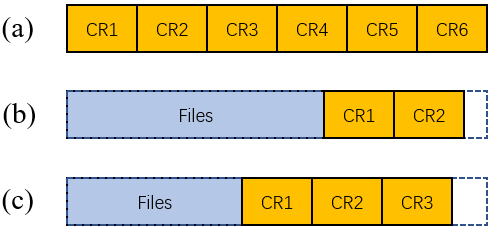}
    \caption{\textbf{Examples of DRep}: Initially the sector contains six Capacity Replicas (as shown in (a)). After filling some files, there are two Capacity Replicas left (as shown in (b)). When the total size of files decreases, the provider regenerates the $CR3$ (as shown in (c)).}
    \label{fig:dcs}
\end{figure}

Ensuring free space of sectors is indeed available by CRs is an efficient way. All CRs only need to be generated by PoRep once and then verified continuously stored via WindowPoSt~\cite{benet2018filecoin}. If a CR has been thrown, the provider can recover it by $\textsf{PoRep.setup}$ because the raw data of a CR are zeros. It doesn't need to go through the whole PoRep process because the Merkle roots of CRs have been previously verified. Therefore, DRep won't bring an extra verification burden on the DSN and providers don't need to generate SNARK of PoRep again.

Additionally, FileInsurer changes the location of replicas at a low cost. Consider that a replica of a file $f$ needs to be transferred to another sector. The provider do not need to generate new replicas of $f$ by PoRep, but just transfer the old ones. Liveness issue occurs that a provider may not transfer the replica of $f$ to the successor provider. However, it doesn't bother because the successor provider can fetch the source data of $f$ from other providers and recover the replica via \textsf{PoRep.Setup}. Similar to CRs, these replicas don't need to be verified again, and they can be recovered from the raw file. Therefore, the movement of replicas is efficient.

\subsection{Storage Market and Retrieval Market}
Similar to Filecoin, there are two markets in FileInsurer, the Storage Market and the Retrieval Market.
The Retrieval Market in FileInsurer is the same as that in Filecoin. In the DSN, clients can send the request to retrieve any file $f$. Any participant with $f$ or a replica of $f$ can answer that request. The process of retrieving is accomplished by BitSwap of IPFS.
However, the Storage Market in FileInsurer is quite different from the one in Filecoin. In FileInsurer, the price of storing a file is decided by the size and the value of that file. Clients do not need to negotiate prices of storage with providers and even do not need to specify who to store their files. Prices for storage services may change over time, which will be discussed in Section~\ref{fee_mechanism}.

\subsection{Source of Randomness}
Just like other DSN protocols, FileInsurer needs a huge amount of on-chain random bits. To achieve this with low expense, we use a pseudorandom number generator~\cite{james1990review,pareek2014overview} to generate long pseudo-random bits based on a short random beacon. Additionally, the issue of generating an unbiased and unpredictable public random beacon in blockchain has been well studied~\cite{cachin2005random,bhat2021randpiper,das2021spurt}. Combining the abovementioned two technologies, we can cheaply get enough public pseudo-random bits. In this paper, we omit the implementation of generating and using random bits because it is too detailed and not our main contribution.

\subsection{Necessity of Compensation in DSN}\label{comneeded}
Compensation is needed in DSN because the scalability of DSN would lead to an unavoidable risk of missing data. Necessarily, the scalability of storage means that a participant in DSN only needs to store a very small part of all data in DSN. Therefore, many data of DSN must only be stored by a small part of participants. 
However, if a constant ratio, for example, $0.1$, of sectors (or storage capacity) crash instantaneously, some data may be lost. So DSN brings a huge risk of file loss.

To demonstrate more clearly, let us provide some other concrete examples. In Storj, a file is lost if enough shards of the file are not available beyond what can be recovered by erasure code. In Filecoin, a file is lost, if and only if all sectors storing replicas of this file crash down.

To balance the safety and the scalability of DSN, compensation is an effective method to motivate users to take part in the distributed file storage. The reasonable deposit shall compensate the users' loss from missing data.

\section{Protocol Design of FileInsurer}\label{protocoldesign}
In this section, we introduce the protocol design of FileInsurer in detail. The insurance scheme is introduced into the protocol design so that the storage providers are responsible for file loss and their deposit can fully compensate the clients whose files are lost. To support the dynamical file storing in sectors, storage randomness is needed to randomly distribute the locations of replicas in DSN, which can be realized by randomly selecting and refreshing the locations of replicas.
Additionally, files with higher values have more replicas so it is harder to destroy all replicas of these files.

In FileInsurer, all file replicas and Capacity Replicas are generated by PoRep, which means that WinningPoSt can be easily achieved. Therefore, the Expected Consensus deployed by Filecoin can be directly applied to our consensus algorithm. Additionally, FileInsurer protocol can be deployed as a smart contract or sidechain in other blockchain protocols such as Ethereum~\cite{wood2014ethereum} and Algorand~\cite{gilad2017algorand, chen2019algorand}.

\subsection{Fee mechanism}\label{fee_mechanism}
In our DSN design, clients need to pay a fee when they obtain the storage service and retrieval service. Moreover, there are three kinds of fees in FileInsurer, which are the traffic fee, storage rent, and prepaid gas fee.

\subsubsection{Traffic fee}
The traffic fee needs to be paid when a client occupies the network bandwidth of providers by transmitting files, retrieving files, or other interactions. The mechanism to pay a traffic fee is necessary because malicious clients may transmit files but pay nothing to block the providers' network. The operation to upload traffic fee must be committed to the storage provider before the file transmission, and the provider obtains the fee only when it has confirmed the file.

\subsubsection{Storage rent} Clients need to pay the storage rent for the used storage space, which is proportional to the size of the file times the number of replicas. The unit rent is the same for all files, and the network informs the client how much rent it should pay. The client will be automatically charged storage rent in the task \textsf{Auto\_CheckAlloc} which will be introduced in \cref{protocol}.
In particular, the network distributes revenue by time period. In a time period, all storage rent is stored in the network at first. At the end of the period, the network distributes the rent to owners of proper functioning sectors during this period. Storage providers are paid proportionally according to their total storage capacity, without paying attention to which file is stored in which sector.

\subsubsection{Prepaid gas fee} After a client stores files on the network, the network needs to periodically check the proof and refresh the file storage locations. These operations use the consensus space and thus incur a gas fee. The gas fee for these operations should be prepaid by the user as these operations are performed automatically. The prepaid gas fee shall be collected together with storage rent through
\textsf{Auto\_CheckAlloc}.

In addition, anyone who submits requests to the network must pay a gas fee to avoid wasting valuable consensus space. The design of the gas fee mechanism is part of the network design. As our DSN design does not focus on the network design, we can use other existing gas fee mechanisms and do not detailedly address it in this work.

\subsection{Deposit and Compensation}
When registering a sector, the storage provider should pledge to DSN with a certain amount of deposit. The deposit is locked until the sector safely quits the system or is corrupted. If the sector safely quits, the deposit would be withdrawn to the storage provider. If the sector is corrupted, the deposit must be confiscated.

When the deposit of a sector is confiscated, it shall be stored in the network to compensate for lost files. File loss in a network means that all its copies are no longer available, i.e. those storage sectors storing the copies are all corrupted. When a file is lost, the network shall provide users with compensation equal to the value of the file. Values of files are given by users when storing their files. If a user reports a higher value than the value of her file, she would pay a higher storage rent, and if she reports a lower value, the compensation would be lower once her file gets lost.

The \emph{deposit ratio} $\gamma_{deposit}$ of FileInsurer is defined as the ratio that the sum of deposits compared to the maximal value of files stored in the network. It can be understood as how much deposit is required for each unit of value stored in the network. Thus, the lower the deposit ratio makes the providers have more incentives to participate in the distributed storage network, and thus make our protocol more competitive.

Now we show how to calculate the deposit by $\gamma_{deposit}$ while registering a sector. Assume that the total size of sectors in FileInsurer is $N_s\times minCapacity$ and the maximal total value of stored files are $N_v^m \times minValue$. For a sector $s$ with capacity $s.capacity$, the deposit should be the proportion of $s.capacity$ in the network multiplied by the total deposit, which is 
$\gamma_{deposit} \times N_v^m \times minValue \times \frac{s.capacity}{N_s \times minCapacity}$. Let $capPara = \frac{N_v^m}{N_s}$ be a constant and the deposit becomes $ s.capacity \times \gamma_{deposit} \times\frac{ capPara \times minValue }{minCapacity}$, which can be calculated only by $s.capacity$, $\gamma_{deposit}$, and some constants. The setting of $\gamma_{deposit}$ are discussed in \Cref{th:ratio}.

\subsection{Main Protocol}
\label{protocol}
\begin{figure*}[htbp]
    \centering
    \subfigure[\textbf{Storing files on the network}: First, the file should be informed to the network to get the sectors where the files are stored at. The client then sends the file to those sectors. The file is successfully stored after the system executes \textsf{Auto\_CheckAlloc}, and the rent is paid every time the system executes \textsf{Auto\_CheckProof}.]
    {
    \includegraphics[scale=.4]{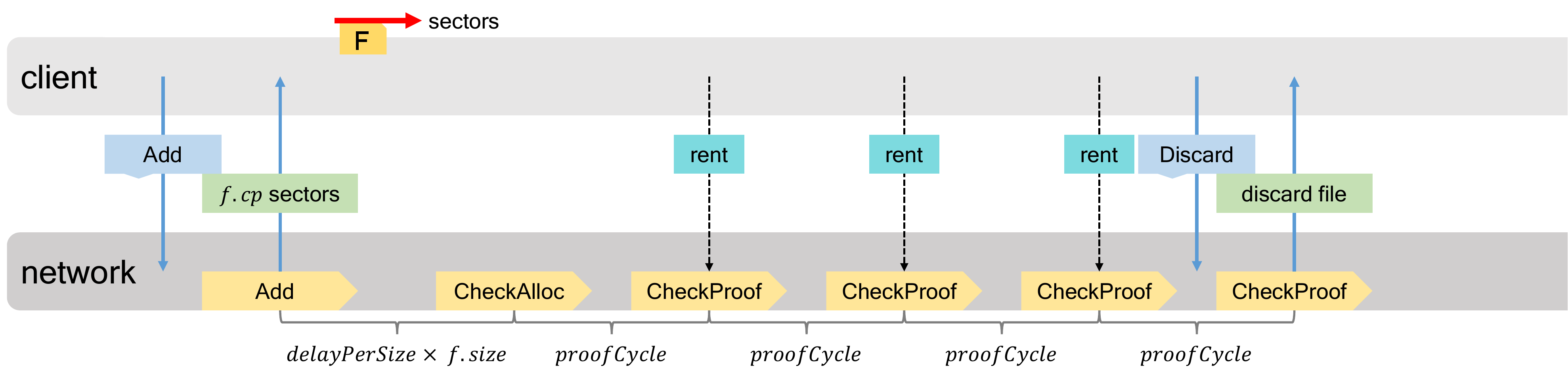}
    }
    \subfigure[\textbf{Renting sectors to the network}: After the sector has been registered, the file will be swapped into or out of the sector through \textsf{Auto\_Refresh} from time to time. Moreover, the network may also inform the sector to take over new files by corresponding \textsf{File\_Add}.]
    {
    \includegraphics[scale=.4]{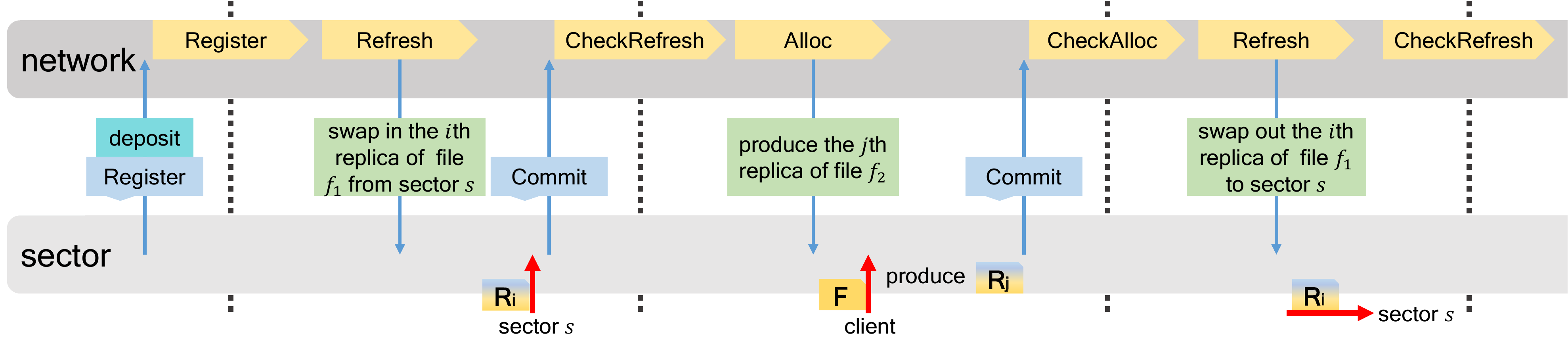}
    }
    \caption{\textbf{Brief overview of the protocol}: How files and sectors interact with the network. The symbol ``F'' represents a file, and ``R'' represents a file replica.}
    \label{brief}
\end{figure*}

\begin{table}[htbp]
\setstretch{0.6}
\caption{Descriptions of parameters and functions}
\label{table:description}
{\centering
\small
\begin{tabular}{c c} 
\textbf{Notation} & \multicolumn{1}{p{0.6\columnwidth}}{\textbf{Description}} \\ [0.5ex] 
\toprule
$RandomSector()$ & \multicolumn{1}{m{0.6\columnwidth}}{Sample a random sector. The probability of selecting each sector is proportional to its capacity.}\\
\midrule
$SampleExp(x)$ & \multicolumn{1}{m{0.6\columnwidth}}{Sample from an exponential distribution with mean $x$.}\\
\midrule
$RandomIndex(f)$ & \multicolumn{1}{m{0.6\columnwidth}}{Sample a number between $1$ and $f.cp$ uniformly at random.}\\
\midrule
$DelayPerSize$ & \multicolumn{1}{m{0.6\columnwidth}}{The maximum transmit time allowed per unit file size. This constant multiplied by the file size is the upper limit of the file transfer time allowed by the network.}\\
\midrule
$AvgRefresh$ & \multicolumn{1}{m{0.6\columnwidth}}{The number of $ProofCycle$s to refresh the file storage on average} \\
\midrule
$ProofCycle$ & \multicolumn{1}{m{0.6\columnwidth}}{Time interval between each inspection proof.}\\
\midrule
$ProofDue$ & \multicolumn{1}{m{0.6\columnwidth}}{The specified upper limit of the time the last proof until now.}\\
\midrule
$ProofDeadline$ & \multicolumn{1}{m{0.6\columnwidth}}{The tolerable upper limit of the time the last proof until now.}\\
\bottomrule\\
\end{tabular}}
\end{table}

FileInsurer mainly includes three parts:
\begin{itemize}
    \item \textsf{File\_}: protocols with \textsf{File\_} prefix handles the storage of data on the network,
    \item \textsf{Sector\_}: protocols with \textsf{Sector\_} prefix handles the sector registration and revocation,
    \item \textsf{Auto\_}: protocols with \textsf{Auto\_} prefix are mainly used for the maintenance of network. They are special because they cannot be called by anyone and will be executed automatically at a specific time.
\end{itemize}

Figure \ref{brief} proposes a brief overview of the protocol of FileInsurer by explaining how files and sectors interact with the network. Table \ref{table:description} lists all parameters and functions used in FileInsurer protocol.

\begin{figure}[ht]
\begin{algorithm}[H]
\caption{\textsf{File} protocol: client part}
\footnotesize
\setstretch{.9}
\textsf{File\_Add}
\begin{itemize}
    \item \textit{Inputs.} the size of the file $sz$, the value of the file $val$ and the merkle root of the file $rt$
    \item \textit{Goal.} generate the file descriptor for the file and allocate $k$ sectors to it for storage
\end{itemize}
\begin{algorithmic}
\State $f\gets (size=sz,value=val,merkleRoot=rt,cp=backupCnt(val),cntdown=-1,state=\texttt{normal})$
\State $count\gets 0$
\For{$i \in [f.cp]$}
    \State $s\gets RandomSector()$
    \While {$s.freeCap< f.size$}
    \Comment{almost never happens}
        \State $s\gets RandomSector()$
    \EndWhile
    \State let $e$ be a reference of $\textsf{allocTable}[f,i]$
    \State $count \gets count+1$
    \State $e \gets (prev=\textbf{null},next=s,last=-1,state=\texttt{alloc})$
\EndFor
\State $t \gets Now+DelayPerSize\times f.size$
\State add $CheckAlloc(f)$ to $\textsf{pendingList}[t]$
\end{algorithmic}

%\hspace*{\fill}

\textsf{File\_Discard}
\begin{itemize}
    \item \textit{Inputs.} a file descriptor $f$
    \item \textit{Goal.} discard file $f$
\end{itemize}
\begin{algorithmic}
    \State $f.state \gets \texttt{discard}$
\end{algorithmic}

\end{algorithm}
\caption{\textbf{\textsf{File} Protocol}: Add and discard files}
\label{file:client}
\end{figure}
\subsubsection{File\_}
\Cref{file:client} shows the network response for the clients' \textsf{File\_} requests. When a client makes a \textsf{File\_Add} request, the network first generates a file descriptor and samples $f.cp$ sectors for storage. The probability of each sector being selected is proportional to the capacity of this sector. The number of backup files that need to be stored is calculated by $f.cp=\frac{f.value}{minValue}k$, where $minValue$ is a parameter representing the lower limit of the file value of network storage and each $f.value$ must be integer multiple of $minValue$. Next, the waiting time is calculated and the user needs to transfer the file to the owner of the selected sectors before the waiting time expires. Once the waiting time expires, a task named as \textsf{Auto\_CheckAlloc} is performed automatically to confirm whether the file is successfully stored on the network. When a client submits a \textsf{File\_Discard} request, the network simply sets the state of the corresponding file descriptor to \texttt{discard}.

\begin{figure}[ht]
\begin{algorithm}[H]
\caption{\textsf{File} protocol: provider part}
\footnotesize
\setstretch{.9}
\textsf{File\_Confirm}
\begin{itemize}
    \item \textit{Inputs.} file descriptor $f$, index $i$ and sector $s$
    \item \textit{Goal.} confirm that a selected sector begins to store a specific file
\end{itemize}
\begin{algorithmic}
\State check the request is from the owner of sector $s$
\State verify $\textsf{allocTable}[f,i].next = s$ and $\textsf{allocTable}[f,i].state = \texttt{alloc}$ 
\State $entry.state \gets \texttt{confirm}$
\end{algorithmic}

%\hspace*{\fill}

\textsf{File\_Prove}
\begin{itemize}
    \item \textit{Inputs.} file descriptor $f$, index $i$, sector $s$ and proof $\pi$
    \item \textit{Goal.} verify that a selected sector is storing specific file
\end{itemize}
\begin{algorithmic}
\State check the request is from the owner of sector $s$
\State verify $\textsf{allocTable}[f,i].prev = s$
\State verify $\pi$ is a valid proof at time $\pi.t$
\State $\textsf{allocTable}[f,i].last \gets \pi.t$
\end{algorithmic}
\end{algorithm}
\caption{\textbf{\textsf{File} Protocol}: Confirm and prove files}
\label{file:provider}
\end{figure}

\Cref{file:provider} illustrates the network response for providers' \textsf{File\_} requests. When receiving an \textsf{File\_Confirm} request, the network sets the state of the corresponding allocation entry to \texttt{confirm}.  It means the sector has successfully received the file. When the network receives a \textsf{File\_Prove} request, it shall update the last proof time of the file storage after checking the correctness of the proof.

\begin{figure}[ht]
\begin{algorithm}[H]
\caption{\textsf{Sector} protocol}
\footnotesize
\setstretch{.9}
\textsf{Sector\_Register}
\begin{itemize}
    \item \textit{Inputs.} capacity $cap$ and owner $own$
    \item \textit{Goal.} register a new sector on the network
\end{itemize}
\begin{algorithmic}
\State the owner pledges deposit proportional to the sector size
\State $s\gets (owner=own,id=nextId(own),capacity=cap,freeCap=cap,state=\texttt{normal})$
\end{algorithmic}
%\hspace*{\fill}
\textsf{Sector\_Disable}
\begin{itemize}
    \item \textit{Inputs.} sector $s$
    \item \textit{Goal.} mark the sector as disabled
\end{itemize}
\begin{algorithmic}
    \State $s.state \gets \texttt{disable}$
\end{algorithmic}
\end{algorithm}
\caption{\textbf{\textsf{Sector} Protocol}: Register and disable sectors}
\label{sector}
\end{figure}
\subsubsection{Sector\_}
It is simple for the network to respond to \textsf{Sector\_} requests. The pseudo-code is shown in Figure \ref{sector}. A new sector is registered when a \textsf{Sector\_Register} request is received and the state of a sector will be set to \texttt{disable} when a request of \textsf{Sector\_Disable} is received. 
When all files in a disabled sector are swapped out, then it can be removed.

\subsubsection{Auto\_}
Note that the tasks with \textsf{Auto\_} prefix cannot be called by anyone and shall be executed at a specific time automatically. In the design of the FileInsurer protocol, the network needs to maintain a pending list to ensure that these tasks are executed at a specific time. There are 4 kinds of tasks with \textsf{Auto\_} prefix, which are \textsf{Auto\_CheckAlloc}, \textsf{Auto\_CheckProof}, \textsf{Auto\_Refresh}, and \textsf{Auto\_CheckRefresh}. In simple terms, \textsf{Auto\_CheckAlloc} is used to check that the file has been correctly stored on the network, \textsf{Auto\_CheckProof} is periodically proof checking, while \textsf{Auto\_Refresh} and \textsf{Auto\_CheckRefresh} are the processes of file storage refreshing in order to ensure the randomness of storage. Therefore, the period of proof checking should be short, and thus the frequency of the file storage location refreshing could be very low.

\begin{figure}[ht]
\begin{algorithm}[H]
\caption{\textsf{Auto} protocol}
\footnotesize
\setstretch{.9}
\textsf{Auto\_CheckAlloc}
\begin{itemize}
    \item \textit{Inputs.} file descriptor $f$
    \item \textit{Goal.} check if file $f$ is already confirmed by all of the selected sectors
\end{itemize}
\begin{algorithmic}
\For{$i \in [f.cp]$}
    \State let $e$ be a reference of $\textsf{allocTable}[f,i]$
    \If{$e.state \neq \texttt{confirm}$ and $e.state \neq \texttt{corrupted}$}
        \State inform that the client failed to upload the file $f$
        %\For{$i \in [f.cp]$}
        %    \State remove $(f,i)$ from \textsf{allocTable}
        %\EndFor
        \State remove $f$ from the network
    \EndIf
\EndFor
\For{$i \in [f.cp]$}
    \State let $e$ be a reference of $\textsf{allocTable}[f,i]$
    \If{$e.state = \texttt{confirm}$}
        \State $e \gets (prev=e.next,next=\textbf{null},last=Now,state=\texttt{normal})$
    \Else
        \State $e \gets (prev=\textbf{null},next=\textbf{null},last=-1,state=\texttt{corrupted})$
    \EndIf
\EndFor
\State $f.cntdown\gets SampleExp(AvgRefresh)$
\State add $CheckProof(f)$ to $\textsf{pendingList}[Now+ProofCycle]$
\State inform that the client succeed to upload the file $f$
\end{algorithmic}
\end{algorithm}
\caption{\textbf{\textsf{Auto\_CheckAlloc}}: Check each allocation has confirmed the file}
\label{auto:checkalloc}
\end{figure}
\textsf{Auto\_CheckAlloc} will be executed automatically at some time after a \textsf{File\_Add} request is responded by the network. The network shall confirm if all $f.cp$ sectors have received the file described by $f$. If so, the network goes to change the state of the file descriptor to \texttt{normal}; otherwise, it shall inform the client that it failed to upload the file.

\begin{figure}[ht]
\begin{algorithm}[H]
\caption{\textsf{Auto} protocol}
\footnotesize
\setstretch{.9}
\textsf{Auto\_CheckProof}
\begin{itemize}
    \item \textit{Inputs.} file descriptor $f$
    \item \textit{Goal.} check that all storage locations of file $f$ are working
\end{itemize}
\begin{algorithmic}
\If{the client of file $f$ has does not have enough tokens to pay the cost for the next cycle}
    \State $f.state \gets \texttt{discard}$
    \State inform that file $f$ is discarded due to insufficient cost
\EndIf
\If{$f.state = \texttt{normal}$}
    \State deduct the cost for the next cycle from the client's account
    \For{$i \in [f.cp]$}
        \State let $e$ be a reference of $\textsf{allocTable}[f,i]$
        \If{$e.prev$ is not corrupted}
            \If{$e.last < Now - ProofDeadline$}
                \State confiscate the deposit of $s$
                \State mark and inform that $s$ is corrupted
            \ElsIf{$e.last < Now - ProofDue$}
                \State punish $e.prev$
            %\Else
            %    \State pay the owner of $s$ rent from the previous period
            \EndIf
        \EndIf
    \EndFor
\EndIf
\If{$f.state=\texttt{discard}$}
    \State remove $f$ from the network
\ElsIf{$\forall j, \textsf{allocTable}[f,j].prev$ is corrupted}
    \State inform that file $f$ is lost
    \State compensate to the client
    \State remove $f$ from the network
\Else
    \State add $CheckProof(f)$ to $\textsf{pendingList}[Now+ProofCycle]$
    \State $f.cntdown\gets f.cntdown-1$
    \If{$f.cntdown=0$}
        \State $i\gets RandomIndex(f)$
        \State \textbf{call} $Refresh(f,i)$
    \EndIf
\EndIf
\end{algorithmic}
\end{algorithm}
\caption{\textbf{\textsf{Auto\_CheckProof}}: Check each proof of the file}
\label{auto:checkproof}
\end{figure}
Every file needs to be checked at some specific time whether it is stored properly. In each specific time period, a task named \textsf{Auto\_CheckProof} automatically runs to check whether each proof to the file is timely. We provide the pseudo-code of \textsf{Auto\_CheckProof} in Figure \ref{auto:checkproof}. We use WindowPoSt of Filecoin~\cite{benet2018filecoin} to implement the proof process. A sector will be punished if it cannot submit the proof of storage of its files within $ProofDue$ time, and then its corresponding deposit is liquidated if the proof of storage of its files cannot be provided within $ProofDeadline$ time.

\begin{figure}[ht]
\begin{algorithm}[H]
\caption{\textsf{Auto} protocol}
\footnotesize
\setstretch{.9}
\textsf{Auto\_Refresh}
\begin{itemize}
    \item \textit{Inputs.} file descriptor $f$ and index $i$
    \item \textit{Goal.} change the $i$-th storage place of file $f$ to a random sector
\end{itemize}
\begin{algorithmic}
\State $s\gets RandomSector()$
\If {$s.freeCap\geq f.size$}
    \State $\textsf{allocTable}[f,i].next \gets s$
    \State $\textsf{allocTable}[f,i].state \gets \texttt{alloc}$
    \State $t \gets Now+DelayPerSize\times f.size$
    \State add $CheckRefresh(f,i)$ to $\textsf{pendingList}[t]$
    \State $pre\gets \textsf{allocTable}[f,i].prev$
    \State inform the $i$th replica of file $f$ should be swapped from $pre$ to $s$
\Else
    \Comment{almost never happens}
    \State $f.cntdown\gets SampleExp(AvgRefresh)$
    %\State $pre\gets \textsf{allocTable}[f,i].prev$
    %\State inform the $i$th replica of file $f$ should be removed by $pre$
    %\State $\textsf{allocTable}[f,i].prev \gets \textbf{null}$
    %\State $\textsf{allocTable}[f,i].next \gets \textbf{null}$
    %\State $\textsf{allocTable}[f,i].state \gets \texttt{corrupted}$
    %\State $f.cntdown\gets SampleIndex(ProofCycle)$
\EndIf
\end{algorithmic}

%\hspace*{\fill}

\textsf{Auto\_CheckRefresh}
\begin{itemize}
    \item \textit{Inputs.} file descriptor $f$ and index $i$
    \item \textit{Goal.} check whether the last refresh for file $f$ is confirmed
\end{itemize}
\begin{algorithmic}
    \State let $e$ be a reference of $\textsf{allocTable}[f,i]$
    \If{$e.state=\texttt{confirm}$}
        \State $e \gets (prev=e.next,next=\textbf{null},last=Now,state=\texttt{normal})$
        \State $f.cntdown\gets SampleExp(AvgRefresh)$
    \Else
        \State punish $entry.next$
        \For{$j \in [f.cp]$}
            \State punish $\textsf{allocTable}[f,j].prev$
        \EndFor
        \State \textbf{call} $Refresh(f,i)$ 
    \EndIf
\end{algorithmic}
\end{algorithm}
\caption{\textbf{\textsf{Auto\_Refresh} and \textsf{Auto\_CheckRefresh}}: Swap in and out the files}
\label{auto:refresh}
\end{figure}
Whenever a random number\footnote{This random number follows an exponential distribution} of checkpoints are passed, a task named \textsf{Auto\_Refresh} will be called to randomly refresh one of the storage places of the file. Figure \ref{auto:refresh} shows the details of \textsf{Auto\_Refresh} and another corresponding task \textsf{Auto\_CheckRefresh}. The probability of sampling the new storage sector is proportional to the capacity of the sector. The network then calculates a waiting time and the current sectors that store this file need to transfer it to the selected sector before the waiting time expires. Once the waiting time expires, the task named \textsf{Auto\_CheckRefresh} will be executed automatically to confirm whether the file is successfully stored in the new sector.

\section{Analysis}\label{analysis}
In this section, we analyze the performance of our protocol and compare FileInsurer with other DSN protocols in detail.
\subsection{Notation and Assumption}
\begin{table}[htbp]
\setstretch{0.7}
\caption{Notation Table}
\label{table:notation}
{\centering
\small
\begin{tabular}{c c} 

\textbf{Notation} &  \multicolumn{1}{p{0.7\columnwidth}}{\textbf{Description}} \\ [0.5ex] 
\toprule
$minCapacity$ & \multicolumn{1}{m{0.7\columnwidth}}{The minimum capacity of a sector. The capacity of each sector is an integer multiple of $minCapacity$.}\\
\midrule
$minValue$ & \multicolumn{1}{m{0.7\columnwidth}}{The minimum value of a file. The value of each file is an integer multiple of $minValue$.} \\
\midrule
$N_f$ & \multicolumn{1}{m{0.7\columnwidth}}{The number of files. }\\
\midrule
$N_{s}$ & \multicolumn{1}{m{0.7\columnwidth}}{The ``weighted'' number of sectors. $N_s\times minCapacity$ indicates the total capacity of the network.}\\
\midrule
$N_v$ & \multicolumn{1}{m{0.7\columnwidth}}{The ``weighted'' number of files. $N_v\times minValue$ indicates the total value of files stored on the network.}\\
\midrule
$N_v^m$ & \multicolumn{1}{m{0.7\columnwidth}}{The maximum ``weighted'' number of files the network is designed to carry. $N_v^m\times minValue$ is the maximum value the network can carry.}\\
\midrule
$\gamma^m_v$ & \multicolumn{1}{m{0.7\columnwidth}}{$\gamma^m_v=\frac{N_v}{N_v^m}$ is the ratio that the total value stored in FileInsurer compared to the maximal value.}\\
\midrule
$\gamma_{deposit}$ & \multicolumn{1}{m{0.7\columnwidth}}{ The deposit ratio.
$\gamma_{deposit}$ is the ratio that the sum of
deposits compared to the maximal value of files stored in the network.}\\
\midrule
$capPara$ & \multicolumn{1}{m{0.7\columnwidth}}{$capPara$ is defined as $\frac{N_v^m}{N_s}$.}\\
\midrule
$c$ & \multicolumn{1}{m{0.7\columnwidth}}{Security parameter. We set it to be $10^{-18}$.}\\
\midrule
$k$ & \multicolumn{1}{m{0.7\columnwidth}}{The number of backups should be stored of a file whose value is $minValue$.}\\
\bottomrule\\
\end{tabular}}
\end{table}

Before the analysis, we list notations in Table~\ref{table:notation} which are necessary for our theoretical analysis. Additionally, The following assumptions are necessary for our theoretical analysis.
\begin{itemize}
    \item \textbf{Consensus security}: FileInsurer requires that the network consensus itself is secure. The issue of consensus security is not the target of this paper.
    \item \textbf{Adversary ability}: FileInsurer allows an adversary to corrupt $\lambda$ proportion of network capacity immediately.
    \item \textbf{Redundant capacity}: FileInsurer requires that the total capacity in the network is no less than twice the total size of all files' replicas. This assumption is deployed to ensure \emph{storage randomness}.
\end{itemize}

\subsection{Performance of FileInsurer}
\subsubsection{Analysis for Capacity Scalability}
We consider the capacity scalability of FileInsurer as the maximal size of stored files. The following theorem indicates that FileInsurer is scalable in capacity.

\begin{restatable}{theorem}{thscala}\label{th:scala}
The total size of files can be stored in FileInsurer is
{\small
\[
\min \left\{\frac{ N_s \times minCapacity}{2r_1 k}, \frac{N_s \times minCapacity }{r_2} \right\},
\]
}
where
{\small
\begin{align}
r_1 & = \frac{\sum_f f.size \times f.value}{minValue \times \sum_f f.size},\label{r1}\\
r_2 & = \frac{minCapcity \times \sum_f f.value}{minValue \times \sum_f f.size \times capPara}.\label{r2}
\end{align}
}
\end{restatable}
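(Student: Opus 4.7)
The plan is to read off the capacity bound as the intersection of two independent constraints that FileInsurer imposes on the files stored: (a) the \emph{redundant capacity} assumption, which caps the total size of all replicas at half the network capacity, and (b) the \emph{value cap}, which forces the total value of stored files to lie below the designed maximum $N_v^m \times minValue$ (this is what makes the deposit calculation in the previous subsection well-defined through $capPara$). I will derive a bound on $\sum_f f.size$ from each constraint and then take the minimum.

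First I would unfold the replication rule $f.cp = \frac{f.value}{minValue}\,k$ so that the aggregate replica footprint becomes
\[
\sum_f f.size \cdot f.cp \;=\; \frac{k}{minValue}\sum_f f.size \cdot f.value.
\]
Applying the redundant-capacity assumption $N_s \times minCapacity \geq 2 \sum_f f.size \cdot f.cp$ and substituting the definition of $r_1$ in \eqref{r1}, namely $\sum_f f.size \cdot f.value = r_1 \cdot minValue \cdot \sum_f f.size$, the factor $minValue$ cancels and rearrangement gives
\[
\sum_f f.size \;\leq\; \frac{N_s \times minCapacity}{2 r_1 k},
\]
which is the first term of the claimed minimum.

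Next I would apply the value-cap constraint $\sum_f f.value \leq N_v^m \times minValue$. From the definition of $r_2$ in \eqref{r2} one has $\sum_f f.value = \frac{r_2 \cdot minValue \cdot \sum_f f.size \cdot capPara}{minCapacity}$; substituting and using $capPara = N_v^m / N_s$ yields
\[
\sum_f f.size \;\leq\; \frac{N_v^m \times minCapacity}{r_2 \times capPara} \;=\; \frac{N_s \times minCapacity}{r_2},
\]
the second term. Both bounds must hold simultaneously, so the total storable size is at most the minimum, completing the proof.

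The derivation itself is routine bookkeeping once the constraints are identified; the main obstacle I anticipate is conceptual rather than technical, namely arguing cleanly that both bounds are \emph{tight} (i.e.\ the minimum is actually achievable, not merely an upper bound). To make the statement an equality I would point to a worst-case file distribution: take a family of identical files whose $(size,value)$ pair is chosen so that one of the two constraints binds and the other is slack, showing that the corresponding term of the minimum is realized. This choice is straightforward because $r_1$ and $r_2$ depend only on the ratios $\sum_f f.size \cdot f.value / \sum_f f.size$ and $\sum_f f.value / \sum_f f.size$, which are both free parameters of the file population.
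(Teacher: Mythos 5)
Your proof takes essentially the same route as the paper's: you identify the same two constraints (the redundant-capacity assumption bounding $\sum_f f.size \cdot f.cp$ by half the total capacity, and the value cap $\sum_f f.value \leq N_v^m \times minValue$), rearrange each into a bound on $\sum_f f.size$ via the definitions of $r_1$ and $r_2$ together with $capPara = N_v^m/N_s$, and take the minimum. The only addition beyond the paper is your closing remark on tightness; the paper's proof, like the body of yours, establishes only the upper bound and is silent on achievability, so your suggestion of exhibiting a file population for which one constraint binds is a fair observation, though the paper evidently intends the theorem as an upper bound and does not pursue it.
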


The proof of Theorem~\ref{th:scala} is provided in Appendix A.
We claim that each of $r_1$ and $r_2$ is bounded by a constant in Section~\ref{dis:r12}. Then the total size of raw files can be stored in FileInsurer is $\tilde O(N_s \times minCapacity )$, which is almost linear to the total size of sectors.

\subsubsection{Storage Randomness}
Storage randomness is an important issue in FileInsurer. Storage randomness can ensure the locations of replicas are evenly distributed. Therefore,
the adversaries must corrupt a huge number of sectors even if they only want to destroy all replicas of a small portion of files. In FileInsurer, replicas are stored by randomly selected sectors in \textsf{File\_Add} and their locations are randomly refreshed by \textsf{Auto\_Refresh}. Such operations make the locations of all replicas are independent and identically distributed. 

However, when the total used space is close to the capacity of DSN, the process of \textsf{File\_Add} and \textsf{Auto\_Refresh} faces the trouble that the free space of selected sectors is not enough for the storage of a replica. We call this event a \emph{collision}. Although sectors can be reselected to store these replicas, Storage randomness would be influenced. Therefore, redundant capacity is required to avoid collisions. We claim that the frequency of collisions is ignorant by preliminary theoretical proof and further experiments. 

We first consider a trivial case that all files have the same size. The following theorem indicates that a collision happens with an extremely low probability.
\begin{restatable}{theorem}{thadd}\label{th:adding}
If all files have the same size $f.size$, for a sector $s$ with total capacity $s.capacity$ and free capacity $s.freeCap$, then 
{\small\[
\Pr\left[ \exists s,~s.freeCap\leq \frac{1}{8}s.capacity\right] \leq N_s\exp\left\{-0.144\frac{s.capacity}{f.size}\right\}.
\]}
\label{th:random}
\end{restatable}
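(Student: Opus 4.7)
The plan is to analyze a single sector first and then apply a union bound. Fix any sector $s$ with capacity $C := s.capacity$, and let $X_s$ denote the number of replicas stored in $s$. Both \textsf{File\_Add} and \textsf{Auto\_Refresh} invoke \textsf{RandomSector()}, which selects sector $s$ with probability $p_s := C / (N_s \cdot minCapacity)$; since every file has the common size $f.size$, each of the $N_r$ replicas in the network is placed independently at $s$ with this same probability, so $X_s \sim \text{Bin}(N_r,\, p_s)$.

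Next I would translate the redundant-capacity hypothesis into a bound on the mean. That assumption gives $N_r \cdot f.size \leq \tfrac{1}{2}(N_s \cdot minCapacity)$, whence
\[
\mu := \mathbb{E}[X_s] \;=\; N_r\, p_s \;\leq\; \frac{C}{2\, f.size}.
\]
Writing $n := C/f.size$, the event $s.freeCap \leq C/8$ is exactly $X_s \geq 7n/8$. Relative to the worst-case mean $\mu = n/2$, this corresponds to $X_s \geq (1+\tfrac{3}{4})\mu$, putting us squarely in the multiplicative Chernoff regime.

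I would then apply the multiplicative Chernoff bound
\[
\Pr[X_s \geq (1+\delta)\mu] \;\leq\; \left[\frac{e^\delta}{(1+\delta)^{1+\delta}}\right]^{\mu},
\]
and observe that, holding the threshold $7n/8$ fixed, the right-hand side is largest at the extreme $\mu = n/2$. Evaluating the exponent at $\delta = 3/4$, $\mu = n/2$ yields a bound of the form $\exp(-\alpha\, n)$ for an absolute constant $\alpha > 0$. A union bound over the at most $N_s$ sectors (each has capacity at least $minCapacity$) then gives the claimed inequality, with $s.capacity$ in the exponent interpreted uniformly as a per-sector bound.

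The main obstacle will be pinning down the numerical constant $0.144$: the textbook form $\exp(-\delta^2\mu/3)$ only delivers $\exp(-3n/32) \approx \exp(-0.094\,n)$, which is too weak. Tightening it requires either the full Kullback--Leibler form $\Pr[X_s \geq m] \leq \exp(-N_r\, D(m/N_r \,\|\, p_s))$ or the Bennett-style exponent $\mu\bigl((1+\delta)\ln(1+\delta)-\delta\bigr)$, together with the monotonicity argument above. A secondary subtlety is that $\mu$ could in principle be much smaller than $n/2$, but since the probability is monotone non-increasing in $\mu$ for a fixed threshold, the worst case is indeed at the upper endpoint dictated by the redundant-capacity assumption, so no further case analysis in $N_r$ is needed.
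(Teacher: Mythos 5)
Your proposal takes essentially the same route as the paper: model the number of replicas landing in a fixed sector as a sum of independent indicators $S=\sum_i X_i$, use the redundant-capacity assumption to bound $\mathrm{E}[S]\leq \frac{s.capacity}{2f.size}$, apply a multiplicative Chernoff bound, and finish with a union bound over the at most $N_s$ sectors. Your handling of the worst-case mean is in fact cleaner than the paper's: the paper rewrites the fixed threshold $\frac{7}{8}\frac{s.capacity}{f.size}$ as $\frac{7}{4}\mathrm{E}[S]$ and then substitutes $\mathrm{E}[S]\leq\frac{s.capacity}{2f.size}$ inside an expression multiplied by the negative constant $\log\frac{e}{4}$, which taken literally reverses the inequality; the MGF/monotonicity argument you sketch (fix the threshold, bound $\mathrm{E}[e^{tS}]\leq e^{\frac{n}{2}(e^t-1)}$, then optimize $t$) is the right way to make this rigorous. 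One typo: you write that the tail probability is ``monotone non-increasing in $\mu$'' when you mean non-decreasing --- a smaller mean makes a fixed high threshold harder to cross --- but your conclusion that the worst case sits at $\mu=n/2$ is correct.

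Your worry about the constant $0.144$ is not a mere loose end; it is a genuine discrepancy, and switching to the KL or Bennett form will not resolve it. The tight Bennett exponent at $\delta=3/4$ is $\mu\bigl(\tfrac{3}{4}-\tfrac{7}{4}\ln\tfrac{7}{4}\bigr)\approx -0.229\,\mu$, and with $\mu\leq n/2$ (where $n=s.capacity/f.size$) this yields at best $\exp(-0.115\,n)$. The paper's intermediate step $\Pr\left[S\geq\tfrac{7}{4}\mathrm{E}[S]\right]\leq \exp\left\{\bigl(\log\tfrac{e}{4}\bigr)\tfrac{3}{4}\mathrm{E}[S]\right\}$ does not correspond to any valid Chernoff form (it is strictly smaller than the Bennett/KL bound, which is tight), so the stated constant $0.144$ appears to be an arithmetic slip and should be roughly $0.114$. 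The qualitative conclusion survives --- the probability remains astronomically small in the intended regime --- but you should not expect a valid Chernoff argument to reproduce $0.144$.
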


The proof of Theorem \ref{th:adding} is provided in Appendix B. 
By \Cref{th:random}, when $\frac{s.capacity}{f.size} \geq 1000$ and $N_s \leq 10^{12}$, we have $\Pr\left[ \exists s,~s.freeCap\leq \frac{1}{8}s.capacity\right] < 10^{-50}$.

A replica of the file can be stored in any sector $s$ with $s.freeCap\leq \frac{1}{8}s.capacity$ because $f.size < \frac{1}{8}s.capacity$. This result indicates that the probability of collision is extremely low under these conditions.

We further consider the general case that the size of files follows a certain distribution. We conduct a series of numerical experiments in two different settings. In the first setting, we reallocate all file backups in one go for $100$ times. In the second setting, we allocate each file backup and then randomly refresh the location of a file backup $100N_{cp}$ times. Recall that $N_{cp}=kN_v$ is the number of file backups and each file $f$ needs to store $f.cp$ backups on the network.

In the experiments, we test several distributions for the size of file backups. We focus on the maximum ratio of capacity usage. If the ratio is less than $1$, no file backups are allocated to sectors with insufficient capacity. \Cref{table:experiment} shows the results of our experiments. We can find that the maximum ratios of capacity usage never exceed $0.64$ under all tested distributions, which means that the probability that file backups are allocated to sectors with insufficient capacity is very small. Therefore, the results of our experiments indicate that collisions would hardly occur when the average size of file backups is much smaller than the sector capacity.

We also discuss how to maintain storage randomness when the list of sectors changes in \cref{maintainrandom}. These results show that storage randomness is easy to be promised in practice. Therefore, each allocation of replicas is assumed to be independent and identically distributed in the following analyses.

\begin{table}[]
    \caption{\textbf{Experiment result:} maximum capacity usage of sectors}
    \centering{\small
    \begin{tabular}{c c|c c c c c}
        \hline\hline
        \multicolumn{7}{c}{reallocate all file backups $100$ times} \
        \cr \cline{1-7}
        \multicolumn{2}{c}{parameter} & \multicolumn{5}{c}{maximum capacity usage} \
        \cr \cline{1-2} \cline{3-7}
        $N_{cp}$ & $N_s$ & $[1]$ & $[2]$ & $[3]$ & $[4]$ & $[5]$\\
        \hline
$10^5$ & 20 & 0.525 & 0.524 & 0.536 & 0.530 & 0.529\\\hline
$10^5$ & 100 & 0.571 & 0.566 & 0.584 & 0.572 & 0.569\\\hline
$10^6$ & 200 & 0.538 & 0.530 & 0.542 & 0.534 & 0.533\\\hline
$10^6$ & 1000 & 0.591 & 0.571 & 0.598 & 0.594 & 0.576\\\hline
$10^7$ & 2000 & 0.540 & 0.534 & 0.544 & 0.545 & 0.534\\\hline
$10^7$ & 10000 & 0.589 & 0.576 & 0.609 & 0.606 & 0.585\\\hline
$10^8$ & 20000 & 0.541 & 0.534 & 0.550 & 0.547 & 0.538\\\hline
$10^8$ & $10^5$ & 0.591 & 0.582 & 0.614 & 0.599 & 0.586\\\hline
        \hline
    \end{tabular}
    \begin{tabular}{c c|c c c c c}
        \hline\hline
        \multicolumn{7}{c}{refresh the location of a file backup $100N_{cp}$ times} \
        \cr \cline{1-7}
        \multicolumn{2}{c}{parameter} & \multicolumn{5}{c}{maximum capacity usage} \
        \cr \cline{1-2} \cline{3-7}
        $N_{cp}$ & $N_s$ & $[1]$ & $[2]$ & $[3]$ & $[4]$ & $[5]$\\
        \hline
$10^5$ & 20 & 0.532 & 0.529 & 0.538 & 0.535 & 0.531\\\hline
$10^5$ & 100 & 0.588 & 0.571 & 0.599 & 0.595 & 0.581\\\hline
$10^6$ & 200 & 0.536 & 0.535 & 0.546 & 0.542 & 0.541\\\hline
$10^6$ & 1000 & 0.592 & 0.581 & 0.610 & 0.605 & 0.589\\\hline
$10^7$ & 2000 & 0.542 & 0.535 & 0.553 & 0.549 & 0.540\\\hline
$10^7$ & 10000 & 0.610 & 0.591 & 0.626 & 0.613 & 0.599\\\hline
$10^8$ & 20000 & 0.551 & 0.547 & 0.560 & 0.558 & 0.548\\\hline
$10^8$ & $10^5$ & 0.611 & 0.604 & 0.639 & 0.628 & 0.611\\\hline
        \hline
    \end{tabular}}
    \begin{tablenotes}
      \footnotesize
      \item $[1]$: Uniform distribution in interval $[0,1]$
      \item $[2]$: Uniform distribution in interval $[1,2]$
      \item $[3]$: Exponential distribution
      \item $[4]$: Normal distribution with $\mu = \sigma^2$
      \item $[5]$: Normal distribution with $\mu = 2\sigma^2$
    \end{tablenotes}
    \label{table:experiment}
\end{table}

\subsubsection{Analysis of Robustness}
We consider the robustness of FileInsurer as the ability of resisting corruptions of sectors. The following theorem indicates that FileInsurer is quite robust. The proof is left in %the full version of this paper. %
Appendix C.

\begin{restatable}{theorem}{throb}\label{th:rob}
Assume that the total size of corrupted sectors is $\lambda N_s \times minCapacity$. Denote the total value of lost files to be $V_{lost}$, and $\gamma_{lost}^v = \frac{V_{lost}}{S_v \times minValue}$ represents the ratio of the value of lost files to the total value of all files.
Then with a probability of not less than $1-c$, $\gamma_{lost}^v$ satisfies
{\small
\[
\gamma_{lost}^v \leq\max\left\{5\lambda^k,\lambda^\frac{k}{2},\frac{4\left(\frac{\log\frac{e}{2\pi}-\log c}{N_s}-\log\left(\lambda^\lambda(1-\lambda)^{1-\lambda}\right)\right)}{\gamma_v^m k \log\frac{1}{\lambda} \times capPara }\right\}.
\]
}
\end{restatable}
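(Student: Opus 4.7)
The plan is to bound $V_{lost}$ by first fixing a hypothetical corruption set, controlling the conditional loss, and then taking a union bound over all possible adversarial choices of that set. By the storage randomness established earlier (through \textsf{File\_Add} and \textsf{Auto\_Refresh}), each replica can be treated as placed into a sector drawn independently and uniformly at random, weighted by the sector's capacity. Consequently, for any fixed set $C$ of sectors with total capacity $\lambda N_s\cdot minCapacity$, each replica lies in $C$ with probability exactly $\lambda$, independently across replicas. A file with $f.cp$ replicas is therefore lost conditional on $C$ with probability $\lambda^{f.cp}$, and the loss events across distinct files are mutually independent, which is what makes the subsequent concentration argument tractable.

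I would next control $V_{lost}(C)$ for each fixed $C$. Writing $a_i=f_i.value/minValue\ge 1$ (so $f_i.cp=a_i k$), the conditional expectation satisfies
\[
\frac{\mathbb{E}[V_{lost}(C)]}{V}=\frac{\sum_i a_i\lambda^{a_i k}}{\sum_i a_i}\le \lambda^k,
\]
because $a\mapsto a\lambda^{ak}\le \lambda^k$ for integer $a\ge 1$ whenever $\lambda^k\le 1/2$, which is the regime of practical interest. A stochastic-dominance argument then shows that the distribution of $V_{lost}(C)/V$ is largest when every file takes the minimum value, i.e.\ $a_i=1$ for all $i$; after this worst-case reduction $V_{lost}(C)/V$ equals $N_{lost}(C)/N_v$ with $N_{lost}(C)\sim\mathrm{Binomial}(N_v,\lambda^k)$, which is a clean binomial tail problem.

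I would then apply a multiplicative Chernoff bound in the KL-divergence form, $\Pr[N_{lost}(C)\ge \alpha N_v]\le \exp(-N_v\,\mathrm{KL}(\alpha\,\|\,\lambda^k))$, together with the lower bound $\mathrm{KL}(\alpha\,\|\,\lambda^k)\ge \tfrac{1}{4}\alpha\, k\log(1/\lambda)$ valid when $\alpha$ is sufficiently larger than $\lambda^k$. A union bound over all feasible adversarial sets $C$ contributes a combinatorial factor of at most $\binom{N_s}{\lambda N_s}$, and Stirling's approximation gives $\binom{N_s}{\lambda N_s}\le \sqrt{e/(2\pi N_s\lambda(1-\lambda))}\,(\lambda^\lambda(1-\lambda)^{1-\lambda})^{-N_s}$, whose logarithm supplies exactly the $\log(e/(2\pi))-N_s\log(\lambda^\lambda(1-\lambda)^{1-\lambda})$ structure appearing in the theorem. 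Setting the resulting total failure probability equal to $c$, using $N_v=\gamma_v^m\cdot capPara\cdot N_s$, and solving for the smallest admissible $\alpha$ produces the third term in the stated maximum, with the constant $4$ absorbing both the $1/4$ slack in the KL lower bound and the Stirling corrections.

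The first two ``floor'' terms $5\lambda^k$ and $\lambda^{k/2}$ cover the regimes in which the multiplicative Chernoff lower bound is no longer informative, namely when $\alpha$ approaches the conditional mean $\lambda^k$ or when $N_v\lambda^k$ is too small for the multiplicative regime. For those regimes I would substitute the additive form of Chernoff (yielding the $5\lambda^k$ floor, roughly five times the expected ratio) and an intermediate estimate with slack of order $\sqrt{\lambda^k}$ (yielding the $\lambda^{k/2}$ floor) into the same union-bound pipeline. The main obstacle I foresee is the joint bookkeeping of all three sources of slack (Chernoff, Stirling, and the confidence parameter $c$) so that the three regime bounds assemble into the exact closed-form maximum stated; a secondary but delicate obstacle is the rigorous justification of the worst-case reduction to files of minimum value, since tail comparisons of weighted sums of independent Bernoullis are not automatic from expectation comparisons alone.
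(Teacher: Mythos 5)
Your skeleton matches the paper's: fix a corruption set of capacity $\lambda N_s\cdot minCapacity$, use storage randomness to treat each file's loss as an independent Bernoulli, apply a KL-form Chernoff bound, union-bound over the at most $\binom{N_s}{\lambda N_s}$ corruption sets, control that combinatorial factor with Stirling, and finish using $N_v=\gamma_v^m\cdot capPara\cdot N_s$. The genuine difference is the reduction to unit-value files, and you are right to flag it as the delicate step: an expectation comparison alone does not control tails. The stochastic dominance you want is in fact true (each $a_iY_i$ with $Y_i\sim\mathrm{Bern}(\lambda^{a_ik})$ is stochastically dominated by $\mathrm{Bin}(a_i,\lambda^k)$, and dominance is preserved under independent sums), but the paper avoids proving any dominance lemma by a deterministic coupling on the \emph{same} allocation and corruption set: split each file $f$ into $f.value/minValue$ sub-files of value $minValue$, each inheriting $k$ of $f$'s $f.cp$ allocations; if $f$ is lost then all its sub-files are lost, so the total lost value can only increase pointwise under the split, and the sub-files' loss indicators are independent $\mathrm{Bern}(\lambda^k)$, giving the binomial model directly. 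One more note: you describe the floors $5\lambda^k$ and $\lambda^{k/2}$ as outputs of separate Chernoff regimes (additive form, $\sqrt{\lambda^k}$ slack), but that is not how the paper obtains them. The paper proves a single inequality $D_{KL}(x\,\|\,p)\ge\tfrac12 x\log\tfrac{x}{p}$ valid for $x\ge 5p$, observes $\log\tfrac{\gamma}{N_v\lambda^k}\ge\tfrac{k}{2}\log\tfrac{1}{\lambda}$ whenever $\gamma\ge N_v\lambda^{k/2}$, and then simply sets $\gamma$ equal to the maximum of all three candidate expressions so that both side conditions hold simultaneously; the constant $4$ then falls out of one chain, and you would not cleanly recover the stated floors from an additive Chernoff bound.
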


Let us propose a concrete example to show that the result of Theorem~\ref{th:rob} is quite strong. Set $k = 20$, $N_s = 10^6$, and $capPara = 10^3$. Let $\lambda = 0.5$, which means that half capacity of FileInsurer is broken. Then
{\small
\[
\gamma_{lost}^v\leq \max\left\{5 \times 10^{-6},0.001,
\frac{1}{\gamma^m_v} \times 5 \times 10^{-6}
\right\}.
\]
}

When $\gamma^m_v \geq 0.005$, $\gamma_{lost}^v \leq 0.001$. It means that in this case, even when half of the capacity of FileInsurer is corrupted, the value of lost files is no more than $0.1\%$ of the value of all stored files.

\subsubsection{Deposit Ratio}
The following theorem indicates that only a small deposit ratio is needed for full compensation.

\begin{restatable}{theorem}{thratio}\label{th:ratio}
Assume that the total size of corrupted sectors is no more than $\lambda N_s \times minCapacity$. If the deposit ratio satisfies
{\small
\[
\gamma_{deposit} \geq \max\left\{5 \lambda^{k-1},\lambda^{\frac{k}{2} - 1},
\frac{4}{k\times capPara }\left( \frac{\log N_s}{\log\frac{1}{\lambda}}+\frac{\log \frac{1}{c}}{\log N_s}\right)\right\},
\]
}
then full compensation can be achieved with a probability of not less than $1-c$.
\end{restatable}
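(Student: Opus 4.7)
The plan is to derive Theorem~\ref{th:ratio} as a direct corollary of Theorem~\ref{th:rob} by pitting the confiscated deposit pool against the probabilistic bound on the lost value. Concretely, the deposit formula in Section~IV-B makes every sector's pledge proportional to its capacity, so when the adversary corrupts a $\lambda$-fraction of the network's capacity the confiscated amount is exactly $\lambda\,\gamma_{deposit}\,N_v^m\,minValue$. Full compensation therefore holds iff $\lambda\,\gamma_{deposit}\,N_v^m\,minValue \geq V_{lost}$. Substituting the Theorem~\ref{th:rob} bound $V_{lost} \leq \gamma_{lost}^v\, N_v\, minValue$, which holds with probability at least $1-c$, and rearranging using $\gamma_v^m = N_v/N_v^m$, it suffices to show that the hypothesis dominates $(\gamma_v^m/\lambda)\,\gamma_{lost}^v$ termwise.

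The first two of Theorem~\ref{th:rob}'s three terms are handled by a one-line computation: multiplying $5\lambda^k$ and $\lambda^{k/2}$ by $\gamma_v^m/\lambda$ and using $\gamma_v^m \leq 1$ yields at most $5\lambda^{k-1}$ and $\lambda^{k/2-1}$, matching the first two pieces of the hypothesis. The entropy term is the interesting one. The $\gamma_v^m$ conveniently cancels, leaving
\[
\frac{4\bigl(\tfrac{\log(e/2\pi)-\log c}{N_s}+H(\lambda)\bigr)}{\lambda\,k\,\log(1/\lambda)\,capPara},
\]
where $H(\lambda)=-\lambda\log\lambda-(1-\lambda)\log(1-\lambda)$ is the binary entropy. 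I will split this expression into two summands and upper-bound them separately by the two pieces $\frac{4\log N_s}{k\,capPara\,\log(1/\lambda)}$ and $\frac{4\log(1/c)}{k\,capPara\,\log N_s}$ appearing in the hypothesis.

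For the entropy summand, the standard estimate $H(\lambda)/\lambda \leq \log(1/\lambda)+O(1)$ combined with the regime $\lambda \geq 1/N_s$ gives $H(\lambda)/(\lambda\log(1/\lambda)) \leq \log N_s/\log(1/\lambda)$ after absorbing lower-order terms, producing the first piece. For the $\tfrac{\log(e/2\pi)-\log c}{N_s}$ summand, since $\log(e/2\pi)<0$ and $c$ is tiny the numerator is essentially $\log(1/c)$, and the inequality $1/(\lambda\log(1/\lambda)\,N_s)\leq 1/\log N_s$, which rearranges to the mild assumption $\lambda N_s\log(1/\lambda)\geq \log N_s$, yields the second piece. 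The main obstacle is precisely this third term: the target form $\tfrac{\log N_s}{\log(1/\lambda)} + \tfrac{\log(1/c)}{\log N_s}$ is a fairly loose, algebraically clean surrogate for the original entropy expression, so I must track constants carefully and rule out the corner cases $\lambda \to 0$ and $\lambda \to 1$ where $\log(1/\lambda)$ misbehaves. Conveniently, the redundant-capacity assumption keeps $\lambda$ bounded away from $1$, and the protocol's practical parameter choices keep $\lambda N_s$ well above any logarithmic threshold, so the bounds should close cleanly in the operating regime.
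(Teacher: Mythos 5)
There is a genuine gap in your argument, and it appears right at the first step. You reduce full compensation to the single inequality $\lambda\,\gamma_{deposit}\,N_v^m\,minValue \geq V_{lost}$, i.e.\ you only pit the deposit against the loss when the adversary corrupts \emph{exactly} a $\lambda$-fraction. But the theorem's hypothesis is that the corrupted fraction is \emph{at most} $\lambda$: if the adversary corrupts only a $\lambda'$-fraction with $\lambda' < \lambda$, the confiscated pool shrinks to $\lambda'\gamma_{deposit}N_v^m\,minValue$, so the deposit must satisfy $\gamma_{deposit} \geq V_{lost}(\lambda')/(\lambda' N_v^m\,minValue)$ \emph{for every} $\lambda' \in [1/N_s,\lambda]$, and the proof has to control the maximum of this ratio over that whole interval. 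This is exactly what the paper does. For the first two terms ($5(\lambda')^{k-1}$ and $(\lambda')^{k/2-1}$) the ratio is increasing in $\lambda'$, so your evaluation at $\lambda' = \lambda$ happens to be the right endpoint. For the third term, however, the binding constraint is at the \emph{other} end, $\lambda' \approx 1/N_s$: there the second summand of the paper's bound becomes $\frac{-4N_s\log c}{N_v^m k\log N_s}$, which is where the $\log(1/c)/\log N_s$ piece of the theorem's displayed bound actually comes from. By looking only at $\lambda' = \lambda$ you are deriving a strictly weaker requirement on $\gamma_{deposit}$ and never touching the regime that determines the stated constant.

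This also explains why you find yourself invoking a ``mild assumption'' $\lambda N_s\log(1/\lambda)\geq\log N_s$ and informally ``absorbing lower-order terms'' in $H(\lambda)/\lambda \leq \log(1/\lambda)+O(1)$: those slacks are not incidental. At $\lambda'=1/N_s$ the inequality $1/(\lambda'\log(1/\lambda')\,N_s)\leq 1/\log N_s$ is an equality and $H(\lambda')/\lambda' = \log N_s + O(1)$ overshoots $\log N_s$, so the Stirling/entropy form of the binomial you inherited from Theorem~\ref{th:rob} does not close at the corner. The paper avoids this entirely by going back to Lemma~2's form of the loss bound (with $\log\binom{N_s}{\lambda' N_s}$ left unexpanded) and then applying the cruder but cleaner inequality $\binom{N_s}{\lambda' N_s}\leq N_s^{\lambda' N_s}$, so that $\log\binom{N_s}{\lambda' N_s}\leq \lambda' N_s\log N_s$: the factor $\lambda'$ cancels against the $\lambda'$ in the denominator of the first summand, and the second summand's $\max_{\lambda'}$ is read off at $\lambda'=1/N_s$ with no leftover lower-order terms. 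Your route through Theorem~\ref{th:rob} is reconstructible in spirit, but you would need to (i) take the maximum over $\lambda'$ from the outset, and (ii) either switch to the $N_s^{\lambda' N_s}$ bound as the paper does or do a much more careful endpoint analysis of the entropy expression near $\lambda' = 1/N_s$. Finally, a small side remark: the redundant-capacity assumption bounds how full the sectors are, not how large a fraction the adversary may corrupt, so it does not keep $\lambda$ away from $1$ as you suggest; it is the explicit lower endpoint $\lambda' \geq 1/N_s$ of the maximization that supplies the relevant guard.
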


The proof of \cref{th:ratio} is in %the full version of this paper. %
Appendix D. 
Set $k = 20$, $N_s = 10^6$, $capPara = 10^3$ and $\lambda = 0.5$. Then $\gamma_{deposit} = 0.0046$ is enough to ensure full compensation, which is relatively small.

\subsection{Comparison with Existing Protocols}

\begin{table}
    \centering\setstretch{1.1}{\small\setlength{\tabcolsep}{1pt}
    \caption{Comparison of DSN Protocols}
    \label{table:comparison}
    \begin{tabular}{|c|c|c|c|c|c|}
        \hline
        Property & FileInsurer &Filecoin &Arweave &Storj &Sia \\
        \hline
        Capacity Scalability & Yes & Yes & Yes & Yes & Yes\\
        Preventing Sybil Attacks & Yes & Yes & Yes & Yes & No\\
        Provable Robustness & Yes & No & No & No & No\\
        Compensation for File Loss & Yes & No$^{[1]}$ & No & No & No \\
        \hline
    \end{tabular}}
    \footnotesize{$^{[1]}$ Provides only limited file loss compensation}\\
\end{table}
\Cref{table:comparison} shows the comparison between FileInsurer and existing DSN protocols including Filecoin, Arweave, Sia, and Storj. We observe that FileInsurer is the only DSN protocol that has provable robustness and gives full compensation for file loss.

\section{Discussion}\label{discussion}

In previous sections, we have proposed the general framework of FileInsurer and theoretically proved the excellent performance of FileInsurer. Besides, some practical issues exist and we explore the corresponding solutions for them under FileInsurer in this section.

\subsection{Distributions and Parameters}\label{dis:r12}
The value and size of a file follows a certain distribution in DSN. We have the following reasonable assumptions about the distribution.
\begin{itemize}
    \item The maximal value of a file is bounded by a constant. Therefore, $r_1$ (defined in ~\cref{r1}) is bounded by a constant.
    \item The average value of a unit size is a bounded constant. Then it's reasonable to assume that $\frac{\sum_f f.value}{\sum_f f.size}$ is bounded by a constant. Therefore, $r_2$ (defined in ~\cref{r2}) is bounded by a constant.
\end{itemize}

The parameters of FileInsurer should be properly set according to the distribution of files. For example, we should set parameters to make $2r_1 k$ is not far away from $r_2$ to further improve scalability bound in \cref{th:scala}. It also helps to avoid the bad situation that the total value of files is far below the maximal, but the used space has reached its limit.

\subsection{Storage Randomness When Adding or Removing Sectors}\label{maintainrandom}
In our analysis of storage randomness, we ignore the case that the network may add or remove sectors online. When a new sector $s$ is registered in the network, in order to maintain the independently and identically distributed property of the allocations, the network should traverse each allocation and swap out the allocation to that sector with the probability of $\frac{s.capacity}{N_s\times minCapacity}$. Such an operation is impossible because traversing over files is too expensive. One good approximation method is that the network first calculates how many files backups need to be swapped into the sector by sampling from a Poisson distribution, and then randomly select the file backups to swap into the sector.

If a sector is disabled, We can request it to keep storing all replicas it currently stores even if they are slowly being swapped out. As a result, it does not get easier to attack the corresponding files. When all of its files are swapped out, this sector no longer exists in the network so the storage randomness can guarantee.

\subsection{Adjusting to Extremely Large Files}
In some special cases, very few huge files, whose sizes are comparable to the capacity of sectors, need to be stored in the network. These very large files might break storage randomness because their allocations might fail to find enough space in one turn. To address this problem from the extremely large files, the network needs to specify an upper limit $sizeLimit$ on the size of a single file. For a file with a size greater than $sizeLimit$, we can convert it to a collection of segments 
by the erasure code, such that each segment's size is upper bounded by $sizeLimit$. By this operation, the file can still be recovered even if half of the segments are lost. Therefore, we can simply regard each segment as an individual file with value $\frac{2value}{k}$. In practice, we can apply the common erasure code such as Reed–Solomon code~\cite{reed1960polynomial} to archive this.

\subsection{Storing Files with Widely Varying Values}\label{sublinear}
In FileInsurer protocol, the value of each file is required to be an integer multiple of $minValue$. Thus a file with a value of $v$ can be treated as $\frac{v}{minValue}$ documents worth of $minValue$. This means that a high-value file needs to have many replicas in the system, and the number of replicas is linearly related to this file's value. A compromise solution is to pre-divide the value levels of files and to establish a storage subnetwork corresponding to each level. Then the clients can choose which subnetwork to store files based on the value level of their files.

\subsection{Avoiding Selfish Storage Providers}
Selfish storage providers refer to these providers who store files but do not normally provide retrieval services. Assume the ratio of the number of selfish storage providers to the number of all providers is $\alpha$ in the network. Then it is expected that $\alpha^k$ proportion of files suffer from the threat of the selfish providers' collusion. Here $k$ is just the number of copies of a stored file. 
As a result, any protocol that fixes file storage locations cannot fundamentally solve the problem of selfish storage providers. However, a natural advantage of FileInsurer is that its file refresh mechanism can fundamentally eliminate the threat from selfish storage providers. Because of the existence of refreshing file storage location, no single file will be completely controlled by the selfish storage provider for a long time.

\subsection{Supports for IPFS}
Filecoin has shown how to support IPFS in a blockchain-based DSN, and FileInsurer has a similar approach. In FileInsurer, the hashes and locations of files are all stored in blockchain. Therefore, it's easy to build and update DHTs and Merkle DAGs on FileInsurer so that anyone can address files stored in FileInsurer through IPFS paths. The retrieval of files can be also realized through BitSwap protocol.

\section{Conclusion}\label{conclusion}

In this paper, we propose FileInsurer, a novel design for blockchain-based \emph{Decentralized Storage Network}, which achieves both scalability and reliability. FileInsurer is the first DSN protocol that gives full compensation to file loss and has provable robustness. Our work also raises many open problems. First, are there other approaches to enhance the reliability of \emph{Decentralized Storage Networks}? For example, a reputation mechanism~\cite{chen2021provable} on storage providers may be also helpful to reduce the loss of files. Second, are there other ways to support dynamic content in sectors other than DRep? Furthermore, can the idea of FileInsurer be extended to decentralized insurance in other scenarios?

\bibliographystyle{abbrv}
\bibliography{references}

\newpage
\begin{appendices}
\section{Proof of Theorem 1}
\thscala*
\begin{proof}
There are two restrictions on the total size of raw files. One is the restriction of total capacity. The other is the restriction of the maximal total value of stored files.

Under the former restriction, each file $f$ is stored as $f.cp$ replicas. Due to the assumption of redundant capacity, The total size of all replicas can not exceed $\frac{1}{2}$ of total capacity. That is,
\[
\sum_f (f.size \times f.cp) \leq \frac{1}{2} N_s \times minCapacity.
\]
Because $f.cp = k\times \frac{f.value}{minValue}$, we have
\[
\sum_f (f.size \times k\times \frac{f.value}{minValue}) \leq \frac{1}{2} N_s \times minCapacity.
\]
Then we have
\[
k \times\frac{\sum_f (f.size \times f.value)}{minValue \times \sum_f f.size}  \leq \frac{1}{2} \times \frac{N_s \times minCapacity}{\sum_f f.size}.
\]
Let $r_1 = \frac{\sum_f f.size \times f.value}{minValue \times \sum_f f.size}$  we have
\[
\frac{k}{r_1}  \leq \frac{1}{2} \times \frac{N_s \times minCapacity}{\sum_f f.size}.
\]
Then
\[
\sum_f f.size \leq \frac{ N_s \times minCapacity}{2r_1 k }.
\]

Under the latter restriction, the total value of files can't exceed $S^m_v \times minValue$. That is,
\[
\sum_f f.value \leq n \times minValue.
\]
Because $n = Cap\_Para\times m$,
\[
\frac{\sum_f f.value}{\sum_f f.size} \leq \frac{Cap\_Para\times N_s  \times minValue}{\sum_f f.size}.
\]
Therefore,
\[
\sum_f f.size \leq \frac{N_s \times minCapacity }{r_2}, 
\]
where
\[
r_2 = \frac{minCapcity \times \sum_f f.value}{minValue \times \sum_f f.size \times Cap\_Para}.
\]
\end{proof}

\section{Proof of Theorem 2}
\thadd*
\begin{proof}
In the special case, all files have the same size $f.size$. For a sector $s$ with capacity $s.capacity$, it can store $\frac{s.capacity}{f.size}$ backups. We define $N_{cp}=kN_v$ as the number of file backups in total because $N_{cp} = \sum_f f.cp = \sum_f \frac{f.value}{minValue} \times k = kN_v$. Additionally, let $X_i$ be the event that the backup $i$ is stored in this sector and $S=\sum_{i=1}^{N_{cp}}X_i$. Because the assumption of redundant capacity, we have $\mathrm{E}[S]\leq \frac{s.capacity}{2 f.size}$. By multiplicative Chernoff bound, we have
\begin{align*}\small
& \Pr\left[s.freeCap\leq \frac{1}{8} s.capacity\right]\\
= & \Pr\left[\sum_{i=1}^{N_{cp}} X_i\geq \frac{7}{8}\frac{s.capacity}{f.size}\right]\\
\leq &  \Pr\left[S\geq \frac{7}{4}\mathrm{E}\left[S\right]\right]\\
\leq & \exp\left\{\left(\log\frac{e}{4}\right)\frac{3}{4}\mathrm{E}\left[S\right]\right\}\\
\leq & \exp\left\{\left(\log\frac{e}{4}\right)\frac{3 s.capacity}{8 f.size}\right\}\\
\leq & \exp\left\{-0.144\frac{s.capacity}{f.size}\right\}
\end{align*}

By applying union bound, we obtain
\small{\[
\Pr\left[ \exists s, s.freeCap\leq \frac{1}{8}s.capacity\right] \leq N_s\exp\left\{-0.144\frac{s.capacity}{f.size}\right\}.
\]}
\end{proof}

\section{Proof of Theorem 3}
We define the state of a FileInsurer network as $(F,S,A,C)$ consisting of files $F$ ,sectors $S$, all allocations $A$, and corrupted bits $C$ in the network. Also, we define $V_{lost}^{(F,S,A,C)}$ as the sum of values of the lost files and $V_{confiscated}^{(F,S,A,C)}$ as the confiscated deposits of the corrupted sectors.

\begin{lemma}
For a specific state $(F,S,A,C)$, keeping the content and availability of each physical disk in the network unchanged, it can be viewed as another state $(F',S,A',C)$ where the value of each file is $minValue$. State $(F',S,A',C)$ satisfies $V_{lost}^{(F,S,A,C)}\leq V_{lost}^{(F',S,A',C)}$.
\label{lemma:filesize}
\end{lemma}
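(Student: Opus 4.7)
The plan is to construct the state $(F', S, A', C)$ explicitly and then compare the contributions of each original file $f \in F$ to the loss in the two states. Concretely, for every file $f \in F$ with value $f.value = m \cdot minValue$ (where $m = f.value / minValue$), I would replace $f$ with $m$ sub-files $f_1', \dots, f_m'$ each of value $minValue$ and copy count $k$. Since $f.cp = k \cdot m$, the $f.cp$ allocation entries for $f$ can be partitioned into $m$ groups of $k$ entries, and $A'$ assigns the $j$-th group to sub-file $f_j'$. Because the physical disks, their contents, and the corrupted bit set $C$ are untouched, each original replica of $f$ corresponds to exactly one replica of some sub-file stored at the same physical location, so the availability status of each replica is the same in both states.

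Next I would analyze the contribution of $f$ to $V_{lost}$ on a file-by-file basis. By definition of file loss in the protocol, $f$ contributes $f.value = m \cdot minValue$ to $V_{lost}^{(F,S,A,C)}$ if and only if every one of its $f.cp = k m$ replicas sits in a corrupted sector; otherwise it contributes $0$. In the modified state, sub-file $f_j'$ contributes $minValue$ to $V_{lost}^{(F',S,A',C)}$ precisely when all $k$ replicas assigned to its group are corrupted. The key observation is: if all $k m$ replicas of $f$ are lost in $(F,S,A,C)$, then in particular each of the $m$ groups of $k$ replicas is completely corrupted, so every sub-file $f_j'$ is lost in $(F',S,A',C)$, contributing a total of $m \cdot minValue = f.value$ to $V_{lost}^{(F',S,A',C)}$. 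Hence $f$'s contribution in the original state is at most the combined contribution of $f_1', \dots, f_m'$ in the new state.

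Summing the per-file inequality over all $f \in F$ gives the desired bound $V_{lost}^{(F,S,A,C)} \leq V_{lost}^{(F',S,A',C)}$. The argument is essentially a sub-additivity / coupling argument: splitting one large file into many small sub-files can only weaken the ``all-or-nothing'' loss criterion, because in the original state the file survives as long as a single replica survives, whereas in the sub-divided state the survival of one replica only rescues the particular sub-file to which that replica is allocated.

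I do not expect a genuine obstacle here; the statement is a straightforward bookkeeping reduction used to let later proofs (e.g.\ for \Cref{th:rob} and \Cref{th:ratio}) assume without loss of generality that every file has value exactly $minValue$ and exactly $k$ replicas. The only point that needs a small amount of care is verifying that the partition of the $k m$ allocation entries of $f$ into $m$ groups of size $k$ is well defined and that the reduction preserves the sector-level constraints (capacity use, free space) --- but these are immediate, since the underlying physical storage is unchanged and the total replica count per sector is identical in $A$ and $A'$.
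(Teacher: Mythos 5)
Your proof is correct and follows essentially the same route as the paper: split each file $f$ into $f.value/minValue$ sub-files of value $minValue$, partition its $f.cp = k\cdot (f.value/minValue)$ allocations into groups of $k$, and observe that if all replicas of $f$ are lost then every sub-file (a subset of those replicas) is also lost, so summing per-file contributions gives the inequality. The only cosmetic difference is that the paper also remarks that $V_{confiscated}$ is preserved under the transformation, which your write-up omits but which is not needed for the stated inequality.
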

\begin{proof}
We divide each file descriptor $f$ to $\frac{f.value}{minValue}$ different file descriptors. These file descriptors all have the value $minValue$ and the same Merkle root as that of $f$. Divide the $f.cp$ allocations of $f$ equally among these new file descriptors, so each file descriptor have exactly $k$ allocations. By defining $F'$ as all these new file descriptors, $A'$ as these new file allocations, we construct a state $(F',S,A',C)$ such that the value of each file is $minValue$.

The content and availability of each physical disk in the network are same in state $(F,S,A,C)$ and state $(F',S,A',C)$. Since we do not change the state of sectors, we simply obtain $V_{confiscated}^{(F,S,A,C)}= V_{confiscated}^{(F',S,A',C)}$. For each file $f$ lost in state $(F,S,A,C)$, since all its backups are lost, every new file descriptor generated by $f$ in state $(F',S,A',C)$ also lost. The value of the file $f$ is equal to the sum of the values of the file descriptors it generates, so $V_{lost}^{(F,S,A,C)}\leq V_{lost}^{(F',S,A',C)}$.
\end{proof}

\begin{lemma}
$\forall 0<p\leq \frac{1}{5}$ and $5p\leq x\leq 1$, $D_{KL}(x||p)\geq \frac{1}{2} x\log\frac{x}{p}$.
\label{lemma:1}
\end{lemma}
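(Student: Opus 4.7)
The plan is to expand the KL divergence and reduce the target inequality to bounding the ``small'' $(1-x)$ term in terms of the ``dominant'' $x\log(x/p)$ term. Writing $D_{KL}(x\|p) = x\log(x/p) + (1-x)\log((1-x)/(1-p))$, the claim $D_{KL}(x\|p) \geq \tfrac{1}{2} x \log(x/p)$ is equivalent to
\[
\tfrac{1}{2} x \log\tfrac{x}{p} \;\geq\; (1-x)\log\tfrac{1-p}{1-x}.
\]
Since $x \geq 5p \geq p$, both sides are nonnegative (using the convention $0\cdot\log(1/0) = 0$ when $x=1$), so it suffices to upper bound the right side and lower bound the left side separately.

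For the right side, I would apply the elementary inequality $\log(1+t) \leq t$ (valid in natural log; the final statement is scale-invariant in the log base, so this is harmless). Writing $\log\tfrac{1-p}{1-x} = \log\!\left(1+\tfrac{x-p}{1-x}\right)$ gives
\[
(1-x)\log\tfrac{1-p}{1-x} \;\leq\; (1-x)\cdot\tfrac{x-p}{1-x} \;=\; x - p.
\]
So it remains to show $\tfrac{1}{2} x \log(x/p) \geq x - p$, or equivalently, setting $r = x/p \geq 5$, that $\log r \geq 2 - 2/r$.

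The final scalar inequality is the heart of the proof, and it is also the main obstacle: the hypothesis $x \geq 5p$ is \emph{exactly} the threshold that makes it work, since $\log 5 \approx 1.6094$ just barely exceeds $2 - 2/5 = 1.6$. I would handle it by noting that $g(r) := \log r - 2 + 2/r$ satisfies $g'(r) = 1/r - 2/r^2 = (r-2)/r^2 \geq 0$ for $r \geq 2$, so $g$ is nondecreasing on $[5,\infty)$. A direct check that $g(5) = \log 5 - 8/5 > 0$ then closes the argument for all $r \geq 5$. Chaining the two bounds gives the lemma. The only subtle point to flag is the boundary $x=1$, where the $(1-x)\log(\cdot)$ term is interpreted as its limit $0$, and the inequality reduces to the trivial $\log(1/p) \geq \tfrac{1}{2}\log(1/p)$.
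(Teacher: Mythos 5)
Your proof is correct, and it takes a genuinely different and substantially cleaner route than the paper's. The paper works with the ratio $h(x)=\frac{x\log(x/p)}{-(1-x)\log\frac{1-x}{1-p}}$ and shows, through a chain of three auxiliary functions $f,g,h$ and their derivative signs, that $h$ is monotone increasing on $[p,1]$ and that $h(5p)>2$ for all $0<p\le\frac15$ (itself requiring another derivative-sign argument and a limit evaluation as $p\to 0$); the lemma then follows from $D_{KL}(x\|p)=x\log\frac{x}{p}\bigl(1-\frac1{h(x)}\bigr)$. You instead eliminate the awkward $(1-x)$ term in one stroke via $\log(1+t)\le t$, getting $(1-x)\log\frac{1-p}{1-x}\le x-p$, which collapses the two-variable inequality to the scalar claim $\log r\ge 2-\frac{2}{r}$ in the single variable $r=x/p\ge 5$, dispatched by one derivative sign check and the endpoint value $\log 5>\frac{8}{5}$. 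Your approach buys transparency: it isolates exactly where the threshold $5p$ is needed (the endpoint $g(5)=\log 5-\frac85\approx 0.0094$ is just barely positive), and it avoids the paper's nested monotonicity arguments, whose derivative simplifications the reader must take on faith. What your bound gives up is tightness in the tail: as $x\to 1$ the paper's $h(x)\to\infty$, showing the constant $\frac12$ is far from sharp there, whereas your linearization $\log(1+t)\le t$ is loosest precisely when $x$ is large; but since the lemma only needs the fixed constant $\frac12$, this costs nothing. You also correctly flag the $x=1$ boundary and the base-invariance that licenses using natural log for $\log(1+t)\le t$.
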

\begin{proof}
In the proof below, we will use $x\geq p$ unspecified. Let 
\[
\begin{cases}
f(x)=x^2\log\frac{x}{p}-(1-x)^2\log\frac{1-x}{1-p}\\
g(x)=\log\frac{x-1}{p-1} \left(-\log\frac{x}{p}+x-1\right)-x \log\frac{x}{p}\\
h(x)=\frac{x\log\frac{x}{p}}{-(1-x)\log\frac{1-x}{1-p}}
\end{cases}
\]whose domain is $x\in[p,1]$. First, we have $f(x)\geq 0$ because $\frac{\mathrm{d}f}{\mathrm{d}x}=1+2D_{KL}(x||p)\geq 0$ and $f(p)=0$. Then, we have $g(x)\geq 0$ because $\frac{\mathrm{d}g}{\mathrm{d}x}=\frac{f(x)}{x(1-x)}\geq 0$ and $g(p)=0$. Finally, we obtain $h(x)$ is monotonically increasing because $\frac{\mathrm{d}h}{\mathrm{d}x}=\frac{g(x)}{(x-1)^2 \log ^2\frac{1-x}{1-p}}\geq 0$. 

Because $h(x)$ is monotonically increasing, $\forall x\geq 5p$, $ h(x)\geq h(5p)=\frac{5 p \log 5}{(1-5p) \log\frac{1-p}{1-5p}}$. We can find that
\begin{align*}
\frac{\mathrm{d}}{\mathrm{d}p} \frac{5 p \log 5}{(1-5p) \log\frac{1-p}{1-5p}}
= & \frac{5 \log 5 \left((1-p) \log\frac{1-p}{1-5 p}-4 p\right)}{(1-5 p)^2 (1-p) \log ^2\left(\frac{1-p}{1-5 p}\right)}\\
\geq & \frac{5 \log 5 \left((1-p) (1-\frac{1-5p}{1-p})-4 p\right)}{(1-5 p)^2 (1-p) \log ^2\left(\frac{1-p}{1-5 p}\right)}\\
= & 0.
\end{align*}
When $p\rightarrow 0$, $\frac{5 p \log 5}{(1-5p) \log\frac{1-p}{1-5p}}>2$, so $\forall 0< p\leq \frac{1}{5},\frac{5 p \log 5}{(1-5p) \log\frac{1-p}{1-5p}}>2$, that is $\forall x\geq 5p,h(x)>2$. At last,
\begin{align*}
D_{KL}(x||p)&=  x\log\frac{x}{p} + (1-x)\log\frac{1-x}{1-p}\\
&=  x\log\frac{x}{p} \left (1+\frac{(1-x)\log\frac{1-x}{1-p}}{x\log\frac{x}{p}}\right)\\
&=  x\log\frac{x}{p} \left (1-\frac{1}{h(x)}\right)\\
&\geq  \frac{1}{2} x\log\frac{x}{p}.
\end{align*}
\end{proof}

\begin{lemma}
Assume that the total size of corrupted sectors is $\lambda N_s \times minCapacity$. Denote the total value of lost files to be $V_{lost}$. Then, with a probability of not less than $1-c$, $V_{lost}$ satisfies
{\small
\[
V_{lost} \leq minValue \times \max\left\{5N_v\lambda^k,N_v\lambda^\frac{k}{2},4\frac{\log\binom{N_s}{\lambda N_s}-\log c}{k\log\frac{1}{\lambda}}\right\}.
\]
}
\label{lemma:2}
\end{lemma}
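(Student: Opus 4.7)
The plan is to reduce the general setting to the case where every file has value exactly $minValue$ (via Lemma~\ref{lemma:filesize}), then for each fixed adversarial choice of corrupted sectors apply a Chernoff tail bound to the count of fully lost files, and finally take a union bound over adversarial choices. After the reduction we have $V_{lost} = minValue \cdot N_{lost}$, where $N_{lost}$ is the number of files whose $k$ independently sampled replicas all land in the corrupted region, so the whole problem reduces to controlling a binomial tail.

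First I fix a set $C'$ of $\lambda N_s$ sectors to be corrupted. Under the storage-randomness assumption, each replica independently lies in $C'$ with probability $\lambda$, so each file of $k$ replicas is fully lost with probability $\lambda^k$, independently across files. Hence $N_{lost} \mid C' \sim \mathrm{Binomial}(N_v, \lambda^k)$, and the standard KL form of the Chernoff bound gives, for any $x \geq \lambda^k$,
\[
\Pr[N_{lost} \geq x N_v \mid C'] \leq \exp\bigl(-N_v\, D_{KL}(x \,\|\, \lambda^k)\bigr).
\]
I then invoke Lemma~\ref{lemma:1} with $p = \lambda^k$: whenever $\lambda^k \leq 1/5$ and $x \geq 5\lambda^k$, it gives $D_{KL}(x \| \lambda^k) \geq \tfrac{x}{2}\log(x/\lambda^k)$. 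If moreover $x \geq \lambda^{k/2}$, then $\log(x/\lambda^k) \geq \tfrac{k}{2}\log(1/\lambda)$, so
\[
N_v\, D_{KL}(x \| \lambda^k) \;\geq\; \frac{k N_v x}{4}\log\frac{1}{\lambda}.
\]
Forcing this to exceed $\log\binom{N_s}{\lambda N_s} - \log c$ and solving for $x N_v$ yields the third branch of the claimed $\max$. Choosing $x N_v$ to be the maximum of $5 N_v \lambda^k$, $N_v \lambda^{k/2}$, and this third expression therefore simultaneously enforces every hypothesis of Lemma~\ref{lemma:1} and caps the per-configuration failure probability by $c / \binom{N_s}{\lambda N_s}$. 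A union bound over the $\binom{N_s}{\lambda N_s}$ possible corrupted sets then pushes the total failure probability to at most $c$; on the complementary event, $V_{lost} \leq minValue \cdot x N_v$, which is exactly the stated bound.

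The main obstacle is the regime where Lemma~\ref{lemma:1} does not apply, namely $\lambda^k > 1/5$. There the KL lower bound is unavailable and the clean $\log(x/\lambda^k) \geq \tfrac{k}{2}\log(1/\lambda)$ step collapses. However, in that regime $5\lambda^k > 1$, so the first branch of the $\max$ already forces $x N_v > N_v \geq N_{lost}$ deterministically, and the inequality holds trivially. Handling this boundary case and reconciling the two competing constraints $x \geq 5\lambda^k$ and $x \geq \lambda^{k/2}$ is precisely why the statement is phrased as a three-way maximum rather than a single closed-form expression.
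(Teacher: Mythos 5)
Your proposal follows essentially the same route as the paper: reduce to unit-value files via Lemma~\ref{lemma:filesize}, condition on a fixed corrupted region, note that under storage randomness the number of lost files is $\mathrm{Binomial}(N_v,\lambda^k)$, apply the KL-form Chernoff bound sharpened by Lemma~\ref{lemma:1}, and union-bound over at most $\binom{N_s}{\lambda N_s}$ adversarial configurations. One point you gloss over: $N_s$ is a \emph{weighted} count (total capacity divided by $minCapacity$), so ``a set $C'$ of $\lambda N_s$ sectors'' is not literally well-formed when sectors have varying capacities; the paper justifies the $\binom{N_s}{\lambda N_s}$ count by conceptually splitting each sector into $\frac{s.capacity}{minCapacity}$ unit subsectors, which only overcounts the adversary's actual choices, and your argument needs that (easy) step to be airtight. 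On the other hand, your explicit treatment of the $\lambda^k > 1/5$ regime, where Lemma~\ref{lemma:1} is inapplicable but the first branch $5N_v\lambda^k > N_v$ makes the bound vacuous, is a correct refinement that the paper leaves implicit.
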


\begin{proof}
Because of \cref{lemma:filesize}, we can make the relaxation of that each file has value $minValue$ in subsequent analysis. Under the relaxations, the setting of the problem can be simplified as follow: there are $N_v$ files, each file has the same value $minValue$ and needs to be stored in $k$ sectors. Each storage location of each file is generated independent and identically distributed. %from all sectors.

For any certain scheme that the adversary corrupts $\lambda$ ratio of capacity, which means that the total size of corrupted sectors is $\lambda N_s \times minCapacity$, define random variable $X_i$ as the indicator variable of that the file $f_i$ is lost. Recall \emph{storage randomness} indicates that all replicas are evenly and randomly distributed. Then all $X_i$ are independent events and $\Pr[X_i]=\lambda^{k}$ when any sectors with total space of $\lambda N_s \times minCapacity$ are corrupted.

Denote $\gamma = \frac{V_{lost}^v}{minValue}$.
By Chernoff bound and \cref{lemma:1}, when $\gamma \geq 5N_v\lambda^k $, we obtain 
\begin{align*}
& \Pr\left[\sum_iX_i \geq \gamma \right]\\
\leq & \exp\left\{-N_v\left(\frac{\gamma}{N_v}\log\frac{\gamma}{N_v\lambda^k}+\left(1-\frac{\gamma}{N_v}\right)\log\frac{N_v-\gamma}{N_v-N_v\lambda^k}\right)\right\}\\
\leq & \exp\left\{-\frac{\gamma}{2}\log\frac{\gamma}{N_v\lambda^k}\right\}.
\end{align*}

Consider the number of scenarios in which an adversary can corrupt sectors with capacity $\lambda N_s \times minCapacity$. Here we do a simple scaling that treat a sector with $s.capacity$ capacity as $\frac{s.capacity}{minCapacity}$ sectors with capacity $minCapacity$. After this scaling, the adversary has $\binom{N_s}{\lambda N_s}$ options to corrupt sectors. Therefore, for the original situation, the number of scenarios in which an adversary can corrupt sectors with capacity $\lambda N_s \times minCapacity$ does not exceed $\binom{N_s}{\lambda N_s}$.

By union bound, when $\gamma\geq 5N_v\lambda^k$, the probability it cannot manufacture $\gamma$ lost files is at least
\[
1-\binom{N_s}{\lambda N_s}\exp\left\{-\frac{\gamma}{2}\log\frac{\gamma}{N_v\lambda^k}\right\}.
\]

When $\gamma\geq N_v\lambda^\frac{k}{2}$, we have $\log\frac{\gamma}{N_v\lambda^k}\geq \log\left(\lambda^{\frac{-k}{2}}\right)=-\frac{k}{2}\log\lambda$. Then we find that
\begin{align*}
& \gamma  \geq  4\frac{\log\binom{N_s}{\lambda N_s}-\log c}{k\log\frac{1}{\lambda}}\\
\Leftrightarrow & \gamma\frac{k}{4}\log\frac{1}{\lambda}  \geq  -\log\frac{c}{\binom{N_s}{\lambda N_s}}\\
\Rightarrow & \frac{\gamma}{2}\log\frac{\gamma}{N_v\lambda^k}  \geq  -\log\frac{c}{\binom{N_s}{\lambda N_s}}\\
\Leftrightarrow & -\frac{\gamma}{2}\log\frac{\gamma}{N_v\lambda^k}  \leq  \log\frac{c}{\binom{N_s}{\lambda N_s}}\\
\Leftrightarrow & \exp\left\{-\frac{\gamma}{2}\log\frac{\gamma}{N_v\lambda^k}\right\}  \leq  \frac{c}{\binom{N_s}{\lambda N_s}}\\
\Leftrightarrow & \binom{N_s}{\lambda N_s}\exp\left\{-\frac{\gamma}{2}\log\frac{\gamma}{N_v\lambda^k}\right\}  \leq  c.
\end{align*}

This shows that when $\gamma$ meets the above three conditions, the probability that an adversary can make $\gamma$ lost files does not exceed $1-c$, that is, $\gamma$ needs to satisfy
\[
\gamma \leq\max\left\{5\lambda^kN_v,\lambda^\frac{k}{2}N_v,4\frac{\log\binom{N_s}{\lambda N_s}-\log c}{k\log\frac{1}{\lambda}}\right\}.
\]

Therefore, with a probability of no less than $1-c$, $V_{lost}^v$ satisfies
{\small
\[
V_{lost}^v \leq minValue \times \max\left\{5N_v\lambda^k,N_v\lambda^\frac{k}{2},4\frac{\log\binom{N_s}{\lambda N_s}-\log c}{k\log\frac{1}{\lambda}}\right\}.
\]
}

\end{proof}

\throb*
\begin{proof}
Now we use an upper bound of the binomial number via Stirling's formula,
\begin{align*}
\binom{N_s}{\lambda N_s} & = \frac{N_s!}{(\lambda N_s)!(N_s-\lambda N_s)!}\\
& \leq \frac{e}{2\pi}\frac{N_s^{N_s+\frac12}}{(\lambda N_s)^{\lambda N_s+\frac12}(N_s-\lambda N_s)^{N_s-\lambda N_s+\frac12}}\\
& = \frac{e}{2\pi}\sqrt{\frac{1}{N_s\lambda(1-\lambda)}}\left(\frac{1}{\lambda^\lambda(1-\lambda)^{1-\lambda}}\right)^{N_s}\\
& \leq \frac{e}{2\pi}\left(\frac{1}{\lambda^\lambda(1-\lambda)^{1-\lambda}}\right)^{N_s}
\end{align*}

Using this upper bound, we can have a simpler version of $\gamma_{lost}^v$:
\begin{align*}
\gamma_{lost}^v & \leq\max\left\{5\lambda^k,\lambda^\frac{k}{2},4\frac{\log\frac{e}{2\pi}-N_s\log\left(\lambda^\lambda(1-\lambda)^{1-\lambda}\right)-\log c}{N_vk\log\frac{1}{\lambda}}\right\}\\
& = \max\left\{5\lambda^k,\lambda^\frac{k}{2},4\frac{\frac{\log\frac{e}{2\pi}-\log c}{N_s}-\log\left(\lambda^\lambda(1-\lambda)^{1-\lambda}\right)}{capPara\cdot \gamma_v^m k\log\frac{1}{\lambda}}\right\}\\
\end{align*}
\end{proof}

\section{Proof of Theorem 4}
\thratio*
\begin{proof}
By assumption, the total size of corrupted sectors is no more than $\lambda N_s \times minCapacity$.
Because the deposit of corrupted sectors should always cover the file loss, for all $\frac{1}{N_s}\leq \lambda^{\prime} \leq \lambda$ we shall have $\lambda^{\prime} \gamma_{deposit}  N_v^m \geq \gamma$, which is equivalent to
\[
\gamma_{deposit} \geq  \max_{\frac{1}{N_s}\leq\lambda^{\prime}\leq \lambda} \left\{ \frac{\gamma}{\lambda^{\prime} N_v^m}\right\}.
\]
Then with probability no less than $1-c$, the following $\gamma_{deposit}$ is enough for full compensation
{\small
\[
\gamma_{deposit} \geq \max_{\frac{1}{N_s}\leq\lambda^{\prime}\leq \lambda} \max \left\{5(\lambda^{\prime})^{k-1},(\lambda^{\prime})^{\frac{k}{2}-1},4\frac{\log\binom{N_s}{\lambda^{\prime} N_s}-\log c}{\lambda^{\prime}N_v^m k\log\frac{1}{\lambda^{\prime}}}\right\}.
\]
}

Considering the third part, we have
\begin{align*}
& 4\frac{\log\binom{N_s}{\lambda^{\prime} N_s}-\log c}{\lambda^{\prime}N_v^m k\log\frac{1}{\lambda^{\prime}}}\\
\leq & \max_{\frac{1}{N_s}\leq\lambda^{\prime}\leq \lambda}
\frac{4\log\left(N_s^{\lambda^{\prime} N_s}\right)-4\log c}{\lambda^{\prime} N_v^mk\log\frac{1}{\lambda^{\prime}}}\\
= & \max_{\frac{1}{N_s}\leq\lambda^{\prime}\leq \lambda}
\frac{4\lambda^{\prime} N_s\log N_s-4\log c}{\lambda^{\prime} N_v^mk\log\frac{1}{\lambda^{\prime}}}\\
\leq & \left(\max_{\frac{1}{N_s}\leq\lambda^{\prime}\leq \lambda}\frac{4N_s\log N_s}{N_v^mk\log\frac{1}{\lambda^{\prime}}}\right)+\left(\max_{\frac{1}{N_s}\leq\lambda^{\prime}\leq \lambda}\frac{-4\log c}{\lambda^{\prime} N_v^m k\log\frac{1}{\lambda^{\prime}}}\right)\\
\leq & \frac{4N_s\log N_s}{N_v^mk\log\frac{1}{\lambda}}+\frac{-4N_s\log c}{N_v^mk\log N_s}.
\end{align*}
Then
{\small
\[
\gamma_{deposit} \geq \max\left\{5 \lambda^{k-1},\lambda^{\frac{k}{2} - 1},
\frac{4N_s\log N_s}{N_v^mk\log\frac{1}{\lambda}}+\frac{-4N_s\log c}{N_v^mk\log N_s}\right\}.
\]
}
As $capPara = \frac{N_v^m}{N_s} $,
{\small
\[
\gamma_{deposit} \geq \max\left\{5 \lambda^{k-1},\lambda^{\frac{k}{2} - 1},
\frac{4}{k\times capPara }\left( \frac{\log N_s}{\log\frac{1}{\lambda}}+\frac{\log \frac{1}{c}}{\log N_s}\right)\right\}.
\]
}

\end{proof}

\end{appendices}

\end{document}